\documentclass[11 pt]{article}
\usepackage[utf8]{inputenc}
\usepackage{amsmath}
\usepackage{amssymb}
\usepackage{amsthm}
\usepackage{xcolor}
\usepackage{float}
\usepackage{booktabs}
\usepackage[shortlabels]{enumitem}
\usepackage[numbers,square,sort&compress]{natbib}
\usepackage{hyperref}
\usepackage[normalem]{ulem}
\usepackage{setspace}
\usepackage{fullpage}
\usepackage{soul}

\newtheorem{thm}{Theorem}[section]
\newtheorem{cor}[thm]{Corollary}
\newtheorem{prop}[thm]{Proposition}
\newtheorem{lem}[thm]{Lemma}

\theoremstyle{definition}

\newtheorem{assump}[thm]{Assumption}

\definecolor{cobalt}{rgb}{0.0, 0.28, 0.67}

\usepackage[dvipsnames]{xcolor}

\theoremstyle{definition}

\title{Can Labor Markets Function in the Age of AI? The Evaluation Bottleneck in Hiring
\\[1em]
\large Preliminary Draft -- Comments Welcome!}
\author{Itai Ashlagi, Ramesh Johari, Jon Kleinberg, Anushka Murthy}
\date{\today}

\begin{document}

\maketitle
\begin{abstract}
AI-assisted job-search tools have become increasingly popular by making it easier to find and apply to jobs. But by making it easier for applicants to generate and tailor application materials, they can also reduce how informative those materials are about applicant fit. We study this tradeoff in a hiring market where applicants differ in experience and latent match quality and firms use noisy application materials to decide whom to screen. We ask how AI affects downstream screening and hiring, and which applicants are most adversely affected. As application materials become less informative, a Bayesian firm rationally relies more heavily on coarse observables such as prior experience. Among the four applicant types defined by experience and compatibility for the job, inexperienced-compatible applicants are the most exposed: they lack observable experience and lose the individualized information that could distinguish them from other inexperienced candidates. When screening is costly, these changes can also generate inefficient screening failures in which firms screen no applicants or screen only experienced applicants. We then show that multistage hiring can arise as an endogenous firm response: a relatively inexpensive intermediate assessment allows firms to acquire new evidence of fit before costly full screening. This can restore screening opportunities that disappear under one-stage hiring and give inexperienced-compatible applicants a path to screening. Our results show how AI can shift the central friction in hiring from submitting applications to obtaining credible evaluation, creating entry barriers for high-fit workers without prior experience. Multistage hiring can endogenously arise in response, restoring evaluation opportunities that would otherwise disappear and helping preserve market functioning.

\end{abstract}
\section{Introduction}

There has been a proliferation of AI-assisted job-search tools. Applicants increasingly use AI systems to draft resumes and cover letters, tailor materials to specific vacancies, search across platforms, and submit applications at scale \citep{softwarefinder2026_ai_job_search,constantino2025_double_edged_sword}. 
Yet the experience of many job seekers and employers has not improved accordingly. Recent accounts describe applicants submitting large numbers of applications with little response, while firms report receiving growing volumes of increasingly similar and difficult-to-verify materials \citep{kearns2025_ai_job_search_quartz,abril2026_ai_resume_wapo,roberthalf2026_ai_applications}. Critics have consequently warned that AI-mediated hiring may push firms toward relying on existing brand-name signals and referral networks rather than a more open and meritocratic labor market \citep{chamorro_premuzic2026_hiring_worse}. This concern is especially salient for new labor-market entrants: recent college graduates continue to face elevated unemployment and underemployment \citep{nyfed2026_college_labor_market}. 

These developments point to a tension in AI-mediated job search. AI can make it easier for workers to enter an applicant pool without necessarily making it easier for firms to determine which applicants are good matches. Indeed, if widespread AI assistance makes written applications less individually informative, the central friction in hiring may shift from submitting an application to the provision and generation of credible, individualized evidence of fit.

Recent evidence suggests that AI weakens applicant-generated signals. \citet{galdin2025making} find that customized applications predicted hiring before the introduction of large language models but became less valuable afterward; eliminating the signaling value of written applications can make hiring substantially less meritocratic. Studying the rollout of an AI-assisted cover-letter tool, \citet{cui2025signaling} find that AI increased tailoring and callbacks but reduced tailoring's predictive content, shifting employers to rely more heavily on past work history and potentially disadvantaging new workers. More generally, \citet{cowgill2026cheapen} show that sender-side access to generative AI reduces screening accuracy on average, although its effects depend on how AI changes the relative informativeness of experts' and non-experts' messages. Together, these studies suggest that AI can weaken applicant-generated signals and shift the evidence firms use for screening.

This paper develops a model of how such a change in the information structure affects access to screening and hiring. While a large literature studies the effects of automation and AI on tasks, productivity, wages, employment, and vacancy creation 
\citep{autor2003skill,autor2015jobs,acemoglu2018race,acemoglu2020robots,webb2020impact,felten2021occupational,brynjolfsson2025generative}, we focus on job search and screening. We consider a firm with a vacancy and heterogeneous potential applicants. Greater AI saturation lowers applicants' costs of applying, reduces the informativeness of application materials, and raises the firm's effective cost of identifying compatible candidates. Applicants differ in observable experience and latent match quality. The firm observes experience and noisy application materials, forms Bayesian posterior beliefs about compatibility, and chooses whom to screen.

Our first set of results shows how degrading individualized information can create an entry barrier for compatible workers without prior experience. As application materials become noisier, the firm  places less weight on applicant-specific evidence and more on experience-based priors. Inexperienced applicants therefore need stronger favorable evidence to overcome experienced applicants’ prior advantage. This mechanism is analogous to classic models of statistical discrimination \citep{phelps1972statistical,aigner1977statistical}, but with respect to applicant experience. 

This reweighting is privately rational but disproportionately harms inexperienced-compatible candidates. As individualized signals deteriorate they lose the evidence that could offset the lower prior associated with inexperience. As a result, increasing AI saturation can harm labor-market entry even for workers who are good matches.

We then show that private screening can be inefficient. The firm screens applicants only when the expected private value of doing so exceeds the screening cost, but it does not internalize the worker's surplus from a successful match. This creates a gap between the firm's privately optimal screening cutoff and the socially efficient screening cutoff, resulting in some applicants who are worth screening being rejected. We characterize two resulting market failures: one in which the firm screens no applicant even though a social planner would screen at least one, and another in which the firm excludes all submitted inexperienced-compatible applicants even though screening at least one of them would increase expected total surplus. As application materials become uninformative, applicants' posterior compatibility converges toward their experience-group priors. These  failures can therefore arise systematically: screening either  shuts down or continues only for experienced applicants while excluding inexperienced-compatible candidates.

Finally, we study whether a multistage hiring process can mitigate these failures. Rather than making hiring decisions entirely from increasingly noisy application materials, firms may create lower-stakes opportunities for applicants to generate additional credible evidence of fit. In our model, this takes the form of a relatively inexpensive intermediate assessment, such as a work-sample task, live skill test, or structured preliminary interview, which allows the firm to acquire additional applicant-specific information before undertaking costly full evaluation. More broadly, internships, predoctoral positions, and other temporary or probationary roles may serve a related function by allowing workers—particularly those with limited prior experience—to demonstrate their match quality through performance.

We characterize an intermediate-assessment region consisting of applicants whose posterior compatibility is too low to justify immediate full screening but high enough that acquiring an additional signal is valuable. These applicants are rejected under one-stage hiring but can advance under multistage hiring after a favorable intermediate assessment. When this region is nonempty, the firm strictly prefers a multistage hiring process, which can avoid the screening failures that occur under one-stage hiring. It also gives inexperienced-compatible applicants who would otherwise be rejected a positive-probability path to screening and hiring. At high levels of AI saturation, this benefit extends to nearly all submitted inexperienced-compatible applicants.

Taken together, our results explain why easier  application need not improve job access or welfare. As applicant-generated materials become less informative, firms rely more on experience-based priors and allocate costly evaluation more selectively, disproportionately harming inexperienced-compatible applicants. Nevertheless, we  find a hopeful feasible path forward: we show that multistage hiring emerges as a privately beneficial firm response that creates a new channel for credible, individualized evidence, and gives inexperienced workers an opportunity to demonstrate fit before firms rely on coarse experience-based priors.

\section{Related Literature}

Our paper connects three streams of research: AI, signaling, and screening in labor markets; directed and sequential search; and statistical discrimination and talent discovery.

\paragraph{AI and labor-market signaling.}
We build on classic models of signaling, screening, and cheap talk
\citep{spence1973jobmarket,stiglitz1975screening,crawford1982strategic,farrell1996cheap}.
Recent evidence shows that generative AI can weaken applicant-generated signals and change the information employers use
\citep{galdin2025making,cui2025signaling,cowgill2026cheapen}.
We study how this deterioration affects access to costly individualized screening and whether multistage evaluation can mitigate the resulting screening failures. Closest to our work, \citet{jungbauer2026} studies how AI-induced deterioration in initial-stage screening signals disrupts assortative matching. Jungbauer also shows that the optimal redesign of a common screening signal need not restore pre-AI levels of informativeness. In contrast, we study how degraded initial information affects access to costly individualized evaluation across experience groups, and how multistage hiring processes can adapt when initial applicant signals become less informative.

\paragraph{Directed and sequential search}
Our model also relates to work on directed search, decentralized matching, and sequential information acquisition. Classic job-search models study costly search under uncertainty \citep{mccall1970economics}, while directed and competitive search models study decentralized markets in which workers direct applications based on observable opportunities and expected competition \citep{peters1991exante,moen1997competitive,burdett2001pricing,wright2021directed,albrecht2006equilibrium,galenianos2009directed}. Related work on online labor markets shows how application costs, congestion, and screening frictions shape applicant and employer behavior \citep{arnosti2021managing,horton2021jobseekers,horton2024reducing,fradkin2025competition,davis2024application}. Most closely, \citet{arnosti2021managing} study a decentralized market in which applicants pay to apply and employers pay to screen. We focus on an additional information friction: the signals firms use to decide whom to screen can themselves become less informative.

The firm's problem also has a sequential-search interpretation. Our one-stage baseline specializes \citet{weitzman1979optimal} to hiring: the firm orders applicants by posterior compatibility and inspects them until it finds an acceptable candidate or further inspection is not worthwhile. The multistage model allows partial information acquisition before full screening, connecting to sequential inspection problems such as \citet{aouad2026pandora} and to hiring models with intermediate assessments or interviews \citep{josephson2016costly,lessem2026matching,sockin2025interviews,jabarianreshidi2025}. We study how the value of this sequential evaluation changes as applicant-generated information deteriorates and experience generates heterogeneous prior beliefs about fit.

\paragraph{Statistical discrimination and entry barriers}

Our entry-barrier results are closely related to statistical discrimination. When firms observe noisy individual signals, they may rationally condition decisions on observable characteristics correlated with expected productivity \citep{phelps1972statistical,arrow1973discrimination,aigner1977statistical}. In our setting, experience plays this role: as applicant-generated materials become less informative, the firm places greater weight on experience-group priors. High-fit inexperienced applicants are especially harmed because they lose the applicant-specific information that could distinguish them from lower-fit workers in the same group.

Related employer-learning models study how firms initially rely on observable characteristics and update as better information arrives \citep{altonji2001employer}. We instead study whether applicants receive the costly information acquisition needed to overcome an unfavorable prior. This connects to work on inefficient talent discovery and certification in entry-level labor markets \citep{pallais2014inefficient,stanton2016landing,tervio2009superstars,li2026hiring}.

Although our baseline is static, reduced access to interviews and jobs may also limit inexperienced workers' ability to accumulate the experience firms increasingly rely on, potentially reinforcing initial disparities \citep{coate1993will,moro2004general,baek2025feedback}. Related to us, \citet{hu2018shortterm} show that temporary two-stage hiring can shift a market toward a more equitable long-run equilibrium; our multistage process instead generates additional applicant-specific information before full screening.

\section{Model}
\label{sec:model}

We study a firm with one vacancy and  $n>1$ potential applicants. AI saturation is modeled by an exogenous parameter $q\in [0,1)$, which affects three screening and application primitives described below: the application-cost scale $c_a(q)$, the variance \(\sigma_m^2(q)\) of the noise in the application-materials signal $m_i$,  and the effective firm screening cost $c_s(q)$.

\paragraph{Environment and payoffs.}
If the firm hires a compatible applicant, the firm receives payoff $v_F>0$. The hired applicant receives payoff $v_A>0$. We interpret $v_F$ and $v_A$ as the respective gains to the firm and applicant from a compatible match, relative to their outside options, so total surplus from a compatible match is $v_F+v_A$.    If no hire occurs, no match surplus is generated; applicants still bear any application costs they have incurred, and the firm still bears any screening costs it has incurred.

\paragraph{Applicants.}
Applicant $i$ has observable experience
\[
e_i\in\{0,1\},
\]
where $e_i=1$ denotes experienced and $e_i=0$ denotes inexperienced. Let
\[
\lambda_e=\Pr(e_i=e),
\qquad
\lambda_0+\lambda_1=1.
\]

Applicant $i$ also has latent match quality $z_i\in\mathbb R$. Match quality is not observed by the firm before screening. Applicant compatibility is given by
\[
k_i=\mathbf 1\{z_i\ge0\}.
\]
The firm seeks to hire an applicant with $k_i=1$. We refer to the pair $(e_i,k_i)\in\{0,1\}^2$ as applicant $i$'s
realized type. Thus $(1,1)$ denotes an experienced-compatible applicant,
$(1,0)$ an experienced-incompatible applicant, $(0,1)$ an
inexperienced-compatible applicant, and $(0,0)$ an
inexperienced-incompatible applicant.

\paragraph{Priors and application materials.}
Conditional on experience,
\begin{equation}
z_i\mid e_i=e
\sim
\mathcal N(\mu_e,\sigma_z^2),
\qquad e\in\{0,1\},
\label{eq:quality-prior}
\end{equation}
with $\mu_1>\mu_0>0$. Thus experienced applicants have higher expected
match quality.

Conditional on applying, applicant $i$ generates a materials signal
\begin{equation}
m_i=z_i+\varepsilon_i,
\qquad
\varepsilon_i\sim\mathcal N(0,\sigma_m^2(q)).
\label{eq:materials-signal}
\end{equation}

\paragraph{Application decisions.}
Applicant $i$ observes her experience $e_i$ and application cost $c_a(q) \eta_i$ before applying, but does not observe $z_i$. Let $d_i\in\{0,1\}$ denote applicant $i$'s application decision. If she applies, she pays
\[
c_a(q)\eta_i.
\]
The cost scalings $\eta_i$ are nonnegative, with $\eta_i\mid e_i=e$ having CDF $F_e$ on $\mathbb R_+$, and capture idiosyncratic differences in the time, effort, or opportunity costs applicants face when preparing and submitting an application. We assume these scalings are independent of match quality and application signal noise conditional on experience. 

Let

$$
\alpha_e(q):=\Pr(d_i=1\mid e_i=e)
$$

denote the equilibrium group-specific application rate. Given application rates $(a_0,a_1)$, let

$$
V_e(q;a_0,a_1)
:=
v_A\Pr(\text{hired}\mid e_i=e,d_i=1;q,a_0,a_1)
$$

denote the expected value of applying for an applicant from experience group $e$. An applicant with experience $e$ applies if and only if

$$
V_e(q;a_0,a_1)\ge c_a(q)\eta_i.
$$

Thus, when $c_a(q)>0$, equilibrium application rates satisfy

$$
\alpha_e(q)
=
F_e\!\left(
\frac{V_e(q;\alpha_0(q),\alpha_1(q))}{c_a(q)}
\right).
$$

When $c_a(q)=0$, we choose a tie-breaking convention that selects

$$
\alpha_e(q)=1.
$$

\paragraph{Firm screening technology.}
The firm observes $(e_i,m_i)$ for each submitted application. Screening applicant $i$ costs
\[
c_s(q)
\]
and reveals $z_i$ perfectly. The firm must screen an applicant before hiring them. 

Let
\[
p_i=\Pr(k_i=1\mid m_i,e_i)
\]
denote applicant $i$'s posterior compatibility. 

\paragraph{Assumptions.}

We complete the model specification by describing how AI saturation affects the key primitives, and by imposing a regularity condition that ensures positive participation by both experience groups.

The following assumption captures the idea that as AI tools become more prevalent, submitted materials become less individually informative about latent match quality.

\begin{assump}[AI saturation lowers signal informativeness]
\label{assump:sigma-increasing}
The signal-noise variance $\sigma_m^2(q)$ is strictly increasing in $q$.
\end{assump}

The next assumption captures the contrasting effects of AI on costs faced by the two sides of the market. AI tools can reduce the time and effort required to generate applications, lowering applicants' effective cost of applying. At the same time, AI may increase the firm's effective cost of identifying compatible applicants, for example because AI-generated application materials can become more homogeneous or less informative.

\begin{assump}[AI saturation lowers application costs and raises effective screening costs]
\label{assump:ca-decreasing}
\label{assump:cs-increasing}
The application-cost scale $c_a(q)$ is strictly decreasing in $q$, and the effective screening cost $c_s(q)$ is strictly increasing in $q$.
\end{assump}

The final assumption on the distribution of experience groups and application costs ensures that both experience groups apply with positive probability; we expand more on this at the end of Section \ref{sec:characterization}.

\begin{assump}[Positive participation assumptions]
\label{assump:positive-participation}
Both experience groups occur with positive probability,
\[
\lambda_e>0,
\qquad e\in\{0,1\},
\]
and zero lies in the support of each experience-specific application-cost distribution:
\[
F_e(x)>0
\qquad
\text{for every }x>0,\ e\in\{0,1\}.
\]
\end{assump}

\paragraph{Timing.}
The timing is as follows.

First, for each applicant $i$, their experience $e_i$, latent match quality $z_i$, and application cost scaling $\eta_i$ are sampled, independently across applicants. Applicants observe $(e_i,\eta_i)$ but do not observe $z_i$. 

Second, applicants decide whether to apply. Applicants who apply pay $c_a(q)\eta_i$.

Third, submitted applications generate materials signals $m_i$, and the firm observes $(e_i,m_i)$ for each applicant in the realized pool of applications.

Fourth, the firm forms posterior compatibility beliefs $p_i=\Pr(k_i=1\mid m_i,e_i)$ and chooses whom to screen. Screening costs $c_s(q)$ per applicant and reveals $z_i$. The firm hires the first compatible applicant it screens, if any, and applicant and firm values are realized.

\section{Bayesian Reweighting and Firm Screening}
\label{sec:characterization}

In this section, we characterize how AI saturation changes the firm's posterior beliefs and optimal one-stage screening decisions. 

\paragraph{Posterior beliefs and Bayesian reweighting.}

Among submitted applicants, the firm updates from the
experience-group prior in Equation \eqref{eq:quality-prior} using the materials signal in Equation
\eqref{eq:materials-signal}.

Let
\[
\Delta_\mu:=\mu_1-\mu_0>0
\]
denote the experience-group prior difference. Define
\[
\kappa(q)
:=
\frac{\sigma_z^2}
{\sigma_z^2+\sigma_m^2(q)}
\]
and
\[
\sigma_{\mathrm{post}}^2(q)
:=
\frac{\sigma_z^2\sigma_m^2(q)}
{\sigma_z^2+\sigma_m^2(q)}.
\]
For each applicant, define the posterior-mean score
\[
S_i(q)
:=
\mathbb E[z_i\mid m_i,e_i;q].
\]
Finally, let
\[
\theta(q)
:=
(1-\kappa(q))\Delta_\mu.
\]
The following result describes how the firm updates their posterior after receiving an application.

\begin{prop}[Bayesian reweighting toward experience]
\label{prop:bayesian-reweighting}
Fix $q$ with $\sigma_m^2(q)>0$. Conditional on a given level of AI saturation and applicant $i$'s materials $m_i$ and experience $e_i$, the firm assigns them a posterior compatibility probability given by:
\[
p_i(q)
=
\Phi\left(
\frac{S_i(q)}
{\sigma_{\mathrm{post}}(q)}
\right),
\]
where
\[
S_i(q)
=
\kappa(q)m_i+(1-\kappa(q))\mu_{e_i}
=
(1-\kappa(q))\mu_0+\theta(q)e_i+\kappa(q)m_i.
\]
For any submitted applicants $i$ and $j$,
\[
p_i(q)\ge p_j(q)
\quad\Longleftrightarrow\quad
S_i(q)\ge S_j(q).
\]

Under Assumption \ref{assump:sigma-increasing}, for any
$\widetilde q>q$,
\[
\kappa(\widetilde q)<\kappa(q),
\qquad
\theta(\widetilde q)>\theta(q),
\]
and
\[
\frac{\theta(q)}{\kappa(q)}
=
\frac{\Delta_\mu\,\sigma_m^2(q)}{\sigma_z^2}
\]
is strictly increasing in $q$.
\end{prop}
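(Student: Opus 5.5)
The plan is to obtain the posterior by standard Gaussian conjugacy and then read off the monotonicity claims from the explicit formulas. A subtle point comes first. The firm conditions on $(m_i,e_i)$ among \emph{submitted} applicants. I will argue that selection into applying does not distort the prior on $z_i$. The application decision $d_i$ depends only on $(e_i,\eta_i)$, and by assumption $\eta_i$ is independent of $(z_i,\varepsilon_i)$ conditional on $e_i$. Hence conditional on $e_i=e$ and $d_i=1$, we still have $z_i\sim\mathcal N(\mu_e,\sigma_z^2)$. Also $\varepsilon_i\sim\mathcal N(0,\sigma_m^2(q))$ is independent of $z_i$. Independence across applicants means the other applicants' materials carry no information about $z_i$. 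So the relevant posterior is just the single-applicant normal–normal update.

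Next, I will compute that update. With prior $\mathcal N(\mu_{e},\sigma_z^2)$ and likelihood $m_i\mid z_i\sim\mathcal N(z_i,\sigma_m^2)$, completing the square in $z_i$ gives
\[
z_i\mid m_i,e_i\sim\mathcal N\bigl(\kappa m_i+(1-\kappa)\mu_{e_i},\ \sigma_{\mathrm{post}}^2\bigr).
\]
This identifies $S_i(q)$. The second expression for $S_i$ follows by writing $\mu_{e_i}=\mu_0+\Delta_\mu e_i$ for $e_i\in\{0,1\}$. Then
\[
p_i=\Pr(z_i\ge 0\mid m_i,e_i)=1-\Phi(-S_i/\sigma_{\mathrm{post}})=\Phi(S_i/\sigma_{\mathrm{post}}),
\]
using symmetry of $\Phi$. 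The hypothesis $\sigma_m^2(q)>0$ guarantees $\sigma_{\mathrm{post}}(q)>0$, so the formula is well defined. The ranking equivalence then follows because $\sigma_{\mathrm{post}}(q)$ is a common positive constant across applicants at fixed $q$, and $\Phi$ is strictly increasing.

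Finally, I will derive the comparative statics. Since $\kappa(q)=\sigma_z^2/(\sigma_z^2+\sigma_m^2(q))$ is a strictly decreasing function of $\sigma_m^2$, Assumption \ref{assump:sigma-increasing} gives $\kappa(\widetilde q)<\kappa(q)$. Hence $1-\kappa$ strictly increases. Multiplying by $\Delta_\mu>0$ gives $\theta(\widetilde q)>\theta(q)$. For the ratio I will use $1-\kappa=\sigma_m^2/(\sigma_z^2+\sigma_m^2)$. This gives $\theta/\kappa=\Delta_\mu\sigma_m^2(q)/\sigma_z^2$, which is strictly increasing in $q$ by the same assumption.

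There is one technicality: the case $\sigma_m^2(q)=0$ is only possible at the lowest $q$. For the comparative statics, the formulas for $\kappa$ and $\theta$ remain valid there, so no issue arises. The only step requiring genuine care is the first one, justifying that conditioning on application leaves the Gaussian prior intact. Everything after it is routine algebra.
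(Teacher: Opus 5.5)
Your proposal is correct and follows essentially the same route as the paper. Like the paper, you first show that selection into application is uninformative about $(z_i,\varepsilon_i)$ given experience. You then apply normal--normal conjugate updating to get $S_i(q)$ and $\sigma_{\mathrm{post}}^2(q)$, obtain $p_i(q)=\Phi(S_i(q)/\sigma_{\mathrm{post}}(q))$, and read off the monotonicity of $\kappa$, $\theta$, and $\theta/\kappa=\Delta_\mu\sigma_m^2(q)/\sigma_z^2$ directly from Assumption~\ref{assump:sigma-increasing}. Your remarks on cross-applicant independence and on the boundary case $\sigma_m^2(q)=0$ are harmless additions that the paper's argument does not need.
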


The proposition formalizes the statistical-discrimination mechanism.
As application materials become noisier, the firm's Bayes-optimal response is to place less weight on
the applicant-specific materials signal and more weight on the applicant's
prior experience. Equivalently, the amount of materials evidence required
to offset the prior difference between experienced and inexperienced applicants
increases with AI saturation.

\paragraph{Optimal one-stage screening.}

Suppose the firm receives $M$ submitted applications and observes posterior compatibilities
for each applicant in the realized pool. Relabel the submitted applicants so that
\[
p_1(q)\ge p_2(q)\ge\cdots\ge p_M(q).
\]
Define the firm's private screening cutoff by
\[
p^F(q)
:=
\frac{c_s(q)}{v_F}.
\]
The following proposition characterizes the firm's optimal one-stage screening rule.\footnote{This problem is a binary-payoff special case of Weitzman's Pandora problem; see \citep{weitzman1979optimal}. Each applicant is a ``box,'' screening is inspection, and the hidden payoff is $v_F k_i$. To remain self-contained, we give a direct proof of the optimal screening rule.}

\begin{prop}[Optimal one-stage screening rule]
\label{prop:optimal-screening}
Fix $q$ and a realized applicant pool. Suppose $v_F>0$ and consider $p^F(q):=\frac{c_s(q)}{v_F}\in(0,1)$. The firm's optimal screening policy is to order applicants by decreasing posterior compatibility and screen them sequentially until either a compatible applicant is found or no remaining applicant satisfies
\[
p_i(q)\ge p^F(q).
\]
Equivalently, the firm's screening list consists of applicants whose posterior compatibility satisfies
\[
p_i(q)\ge p^F(q)=\frac{c_s(q)}{v_F}.
\]

Under Assumption \ref{assump:cs-increasing}, for every $\widetilde q>q$,
\[
p^F(\widetilde q)>p^F(q).
\]
\end{prop}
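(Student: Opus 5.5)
The plan is to treat the firm's problem as a finite-horizon optimal stopping problem over subsets of the realized pool and argue by induction on the number $M$ of submitted applicants. The key observation is that, conditional on the observed $(e_i,m_i)$, the compatibility indicators $k_i$ are independent Bernoulli$(p_i)$ random variables, because the $(z_i,\varepsilon_i)$ are drawn independently across applicants. Screening a compatible applicant ends the search with payoff $v_F$, since there is one vacancy and the firm hires the first compatible applicant. Screening an incompatible applicant reveals nothing about the others. The state of the problem is therefore just the set of unscreened applicants, and the firm's value from a set $A$ satisfies
\[
W(A)=\max\Bigl\{0,\ \max_{i\in A}\bigl[-c_s+p_iv_F+(1-p_i)W(A\setminus\{i\})\bigr]\Bigr\}.
\]
I will show by induction on $|A|$ that $W(A)$ equals the value $G(A)$ of the proposed rule: screen the applicants with $p_i\ge p^F$ in decreasing order, stopping at the first success.

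For the induction step I use an exchange argument. Take two applicants $i,j$ with $p_i\ge p_j$, both screened consecutively. Screening $i$ then $j$ yields
\[
-c_s+p_iv_F+(1-p_i)\bigl(-c_s+p_jv_F\bigr)+(1-p_i)(1-p_j)R,
\]
where $R$ is the continuation value. The reverse order swaps $p_i$ and $p_j$, and the terms involving $R$ are symmetric. The difference between the two orders reduces to $c_s(p_i-p_j)\ge 0$, so screening the higher-posterior applicant first is weakly better.

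Next I show the stopping part: screening is worthwhile only for applicants with $p_i\ge p^F$. The one-step gain from screening $i$ and then stopping is $p_iv_F-c_s$, which is nonnegative exactly when $p_i\ge p^F$. To turn this into a statement about optimal policies, I argue by induction that an optimal policy never screens an applicant with $p_i<p^F$. At any point, the set of remaining applicants worth screening is a sublist of a sorted list. Screening a below-cutoff applicant $i$ and then following any continuation $R\ge 0$ gives
\[
-c_s+p_iv_F+(1-p_i)R.
\]
This is at most $R$ whenever $p_iv_F-c_s\le p_iR$. That inequality holds because $p_iv_F-c_s<0\le p_iR$. So skipping $i$ and going straight to the continuation is weakly better. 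Conversely, screening an above-cutoff applicant $i$ first and then continuing adds $p_i(v_F-R)-c_s+p_iR-\dots$. The cleanest way to show this is nonnegative is to show that $W(A)\ge W(A\setminus\{i\})$ holds with the recursion, using $p_iv_F-c_s\ge0$ and $v_F\ge R$, since $R\le v_F$ always. Combining the ordering and threshold steps gives the stated policy, and the ``screening list'' formulation is just a restatement of it.

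The main obstacle is organizing this induction cleanly, in particular showing that adding an above-cutoff applicant to the end of the sorted list never hurts. I will handle it by direct comparison: inserting $i$ at its sorted position changes the value by the product of failure probabilities of the earlier applicants times $(p_iv_F-c_s)+\dots\ge0$. Ties at $p_i=p^F$ are payoff-irrelevant.

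Finally, the monotonicity claim is immediate. Since $v_F>0$ is fixed and $c_s(q)$ is strictly increasing by Assumption~\ref{assump:cs-increasing}, for $\widetilde q>q$ we get $p^F(\widetilde q)=c_s(\widetilde q)/v_F>c_s(q)/v_F=p^F(q)$.
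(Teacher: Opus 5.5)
Your route is essentially the paper's: the pairwise interchange giving the difference $c_s(p_i-p_j)$, a marginal comparison of $p_iv_F$ with $c_s$, and the one-line monotonicity of $p^F$. Your exclusion step is correct: screening a below-cutoff applicant next gives $-c_s+p_iv_F+(1-p_i)R<R$ for any continuation value $R\ge 0$. Inside your recursion for $W$, it also covers something the paper's proof skips. The paper only compares prefixes $U_L$ of the sorted list and never argues that an optimal screening set must be such a prefix.

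The gap is the inclusion step, which you flag but do not close. The facts you cite, $p_iv_F-c_s\ge 0$ and $R\le v_F$, do not show that screening $i$ ahead of a continuation worth $R$ helps. The change is $(p_iv_F-c_s)-p_iR$, and with $v_F=1$, $c_s=0.3$, $p_i=0.5$ this requires $R\le 0.4$, which $R\le v_F$ does not deliver. The inequality $W(A)\ge W(A\setminus\{i\})$ holds for every $i$ (the firm can ignore $i$), so it says nothing about whether $i$ is screened.

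In your insertion computation the suppressed ``$\dots$'' is precisely $-p_iT$, where $T$ is the value of the displaced tail. It is dominated only because every tail applicant has $p_j\le p_i$. That gives $p_jv_F-c_s\le p_j(v_F-c_s/p_i)$ termwise, and hence $T\le v_F-c_s/p_i$. Either supply that bound, or use a cleaner route:

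(a) As in the paper, append $i$ at the \emph{end} of the sorted list, where the change is exactly $\prod_j(1-p_j)\,(p_iv_F-c_s)\ge 0$. Then restore the order by your exchange step.

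(b) Within your induction on $|A|$, show that when the top applicant $i$ has $p_i\ge p^F$, screening $i$ first is optimal. It beats stopping because $-c_s+p_iv_F+(1-p_i)W(A\setminus\{i\})\ge 0$, since both pieces are nonnegative; what you need here is $W\ge 0$, not $R\le v_F$. It beats screening some other $j$ first because, by the inductive hypothesis, $i$ is then screened immediately after $j$, so your exchange step applies.
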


The result follows from comparing the marginal value of screening the next applicant to the marginal screening cost. Conditional on reaching applicant $n$, screening is profitable exactly when
\[
v_F p_n\ge c_s(q).
\]
Thus the posterior compatibility cutoff is given by
\[
p^F(q)=\frac{c_s(q)}{v_F}.
\]

Proposition \ref{prop:optimal-screening} implies that the firm can equivalently use a score cutoff. Since
\[
p_i
=
\Phi\left(
\frac{S_i(q)}
{\sigma_{\mathrm{post}}(q)}
\right)
\]
and $\Phi$ is strictly increasing, the posterior cutoff $p_i\ge  p^F(q)$ is equivalent to
\[
S_i(q)\ge s^F(q),
\]
where
\[
s^F(q):=\sigma_{\mathrm{post}}(q)
\Phi^{-1}\left(p^F(q)\right).
\]
Thus the firm may equivalently rank applicants by their posterior-mean scores and include applicant $i$ in its screening list if and only if $S_i(q)\ge s^F(q)$.

\paragraph{Positive application rates.} We end this section by noting that the results in the following sections require only that applicants from both experience groups submit
applications with positive probability. Assumption \ref{assump:positive-participation} ensures that this occurs for any $q$ such that $p^F(q)\in(0,1)$. Because the Gaussian materials signal has full support, an applicant from either
experience group has positive probability of being compatible and drawing
materials that place their posterior above $p^F(q)$. There is also positive
probability that every other potential applicant is incompatible. Hence
\[
V_e(q;\alpha_0,\alpha_1)>0,
\qquad e\in\{0,1\},
\]
and therefore
\[
\alpha_e(q)>0,
\qquad e\in\{0,1\}.
\]

\section{Impact on Workforce Entry and Market Failure}
\label{sec:workforce-entry}

Section \ref{sec:characterization} shows that noisier application materials shift the firm toward experience-based priors and change its screening decisions. We now show that the effects of this reweighting are uneven across the applicant pool: inexperienced-compatible applicants are especially exposed, and sufficiently high AI saturation can generate screening failures in which the firm forgoes screening anyone or screens only experienced candidates.

\subsection{Shortlisting Exposure}
\label{subsec:shortlisting-exposure}

Recall that applicants have realized types
\[
(e,k)\in\{0,1\}^2,
\]
where $e$ denotes experience and $k$ denotes compatibility. Although compatibility is observed only after screening, conditioning on realized compatibility allows us to identify which types are helped or harmed by the firm's Bayes-optimal scoring rule. By Proposition \ref{prop:bayesian-reweighting}, the firm's posterior-mean score
is
\[
S_i(q)
=
\mathbb E[z_i\mid m_i,e_i;q]
=
\kappa(q)m_i+(1-\kappa(q))\mu_{e_i}.
\]

Higher AI saturation lowers $\kappa(q)$, so the firm places less weight on application materials and more weight on experience. We first study how this reweighting changes the average score of each realized type. We then study how higher AI saturation affects the probability that an applicant clears the firm's posterior screening threshold.

\paragraph{Conditional mean score.} For applicants who apply, define the conditional mean score of type $(e,k)$ by
\[
\bar S_{ek}(q)
:=
\mathbb E[
S_i(q)
\mid
e_i=e,\ k_i=k,\ d_i=1
].
\]

Because applicants do not observe $z_i$ before applying and the application costs are independent of match quality conditional on experience, conditioning on application does not change the distribution of $z_i$ within an experience group.  Hence,
\[
\bar S_{ek}(q)
=
\kappa(q)\mu_{ek}
+
(1-\kappa(q))\mu_e,
\]
where
\[
\mu_{ek}
:=
\mathbb E[z_i\mid e_i=e,k_i=k].
\]
For AI saturation levels $q,\tilde{q}$, let $\Delta\bar S_{ek}(q,\widetilde q):=\bar S_{ek}(\tilde{q})-\bar S_{ek}(q)$ denote the change in conditional mean score.

\begin{prop}[Conditional mean score effects]
\label{prop:mean-score-effects}
Consider AI saturation levels $q<\widetilde q$. Under Assumption
\ref{assump:sigma-increasing},
\[
\Delta\bar S_{ek}(q,\widetilde q)
=
\left[
\kappa(\widetilde q)-\kappa(q)
\right]
\left(
\mu_{ek}-\mu_e
\right).
\]
Moreover,
\[
\Delta\bar S_{01}(q,\widetilde q)
<
\Delta\bar S_{11}(q,\widetilde q)
<
0
<
\Delta\bar S_{00}(q,\widetilde q)
<
\Delta\bar S_{10}(q,\widetilde q).
\]
\end{prop}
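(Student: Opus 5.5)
The plan is to derive the difference formula directly from the expression for $\bar S_{ek}(q)$, and then to reduce the chain of inequalities to a comparison of truncated-normal means across the two experience groups. Subtracting $\bar S_{ek}(q)=\kappa(q)\mu_{ek}+(1-\kappa(q))\mu_e$ from the same expression at $\widetilde q$, the $\mu_e$ terms combine to give
\[
\Delta\bar S_{ek}(q,\widetilde q)=\bigl[\kappa(\widetilde q)-\kappa(q)\bigr](\mu_{ek}-\mu_e).
\]
By Proposition \ref{prop:bayesian-reweighting}, Assumption \ref{assump:sigma-increasing} gives $\kappa(\widetilde q)-\kappa(q)<0$. Every $\Delta\bar S_{ek}$ is therefore a common negative multiple of $\mu_{ek}-\mu_e$. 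The stated chain is then equivalent to
\[
\mu_{01}-\mu_0>\mu_{11}-\mu_1>0>\mu_{00}-\mu_0>\mu_{10}-\mu_1 .
\]

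Next, I compute these deviations explicitly using the Gaussian prior \eqref{eq:quality-prior}. Let $t_e:=\mu_e/\sigma_z$. Since $k_i=1$ exactly when $z_i\ge 0$, the standard truncated-normal mean formulas give
\[
\mu_{e1}-\mu_e=\sigma_z\frac{\phi(t_e)}{\Phi(t_e)},\qquad \mu_{e0}-\mu_e=-\sigma_z\frac{\phi(t_e)}{1-\Phi(t_e)} .
\]
These already give the signs: $\mu_{e1}-\mu_e>0>\mu_{e0}-\mu_e$ for both groups, so $\Delta\bar S_{11}<0<\Delta\bar S_{00}$.

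For the cross-group comparisons, note that $\mu_1>\mu_0$ gives $t_1>t_0$. It then suffices to know two facts. First, the inverse Mills ratio $\phi(t)/\Phi(t)$ is strictly decreasing in $t$. Second, the Gaussian hazard rate $\phi(t)/(1-\Phi(t))$ is strictly increasing in $t$. The first gives $\mu_{01}-\mu_0>\mu_{11}-\mu_1$. The second gives $\mu_{10}-\mu_1<\mu_{00}-\mu_0$. Together with the signs, this yields the full ordering.

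The main obstacle is justifying these two monotonicity facts cleanly, and I would do it through log-concavity. The function $\phi(t)/\Phi(t)$ is the derivative of $\log\Phi$, and $\Phi$ is strictly log-concave, so this derivative is strictly decreasing. Similarly, $-\phi(t)/(1-\Phi(t))$ is the derivative of $\log(1-\Phi)$, and $1-\Phi$ is strictly log-concave, so the hazard rate is strictly increasing. Alternatively, one can differentiate directly and use the classical Mills-ratio bounds $t<\phi(t)/(1-\Phi(t))$ and its mirror image. Throughout, the fact that conditioning on $d_i=1$ does not alter the distribution of $z_i$ within an experience group is taken as given, as established just before the statement.
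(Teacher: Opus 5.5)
Your proposal is correct and follows the paper's proof essentially step for step: the difference formula, the truncated-normal means with $t_e=\mu_e/\sigma_z$, the monotonicity of $\phi(t)/\Phi(t)$ and of the hazard $\phi(t)/(1-\Phi(t))=\phi(t)/\Phi(-t)$ using $t_1>t_0$, and then multiplication by the negative factor $\kappa(\widetilde q)-\kappa(q)$. Your log-concavity argument for the two monotonicity facts is a useful addition, since the paper simply asserts them.
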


Proposition \ref{prop:mean-score-effects} shows that among the four realized types, inexperienced-compatible applicants face the largest negative change in their conditional mean score. In particular, Bayesian reweighting toward experience creates an entry barrier for new but high-fit candidates. 

The intuition behind the result is that conditional
on experience, compatible applicants have above-average match quality: $\mu_{e1}>\mu_e$. As $\kappa(q)$ falls, their favorable individualized information receives less
weight, so their conditional mean scores fall. Incompatible applicants have
below-average match quality so reducing the weight on individualized information raises their mean posterior
scores. Within the pool of compatible applicants, inexperienced-compatible applicants experience a greater decrease than experienced-compatible applicants because their experience-group prior has a lower mean. In other words, they don't have pre-existing experience to compensate for the decreased weight given to their application materials.

\paragraph{Probability of clearing the screening threshold.} 
The mean-score result captures how AI saturation shifts the average position of each realized type in the firm's screening score. However, inclusion in the
firm's screening list is instead a threshold event: applicant $i$ is included
only if
\[
p_i(q)\ge p^F(q).
\]
We therefore next study
the probability that a type-$(e,k)$ applicant clears the shortlisting score threshold. To separate the effect of declining signal informativeness from movement in the firm's screening threshold, we study the standardized match quality and application materials across experience groups.
 
Fix an experience group $e$ and define
\[
X_i
:=
\frac{z_i-\mu_e}{\sigma_z},
\qquad
Y_i(q)
:=
\frac{m_i-\mu_e}
{\sqrt{\sigma_z^2+\sigma_m^2(q)}}.
\]
Conditional on $e_i=e$, both variables are standard normal, with correlation
\[
r(q)
:=
\frac{\sigma_z}
{\sqrt{\sigma_z^2+\sigma_m^2(q)}}.
\]
Thus $r(q)$ summarizes how informative application materials are about match
quality. Under Assumption \ref{assump:sigma-increasing}, $r(q)$ is strictly
decreasing in $q$.

Compatibility is equivalent to
\[
X_i\ge a_e,
\qquad
a_e
:=
-\frac{\mu_e}{\sigma_z}.
\]
Since $\mu_1>\mu_0$, we have $a_1<a_0$.

The firm's screening cutoff can also be expressed in units of the standardized
materials signal. Since
\[
S_i(q)
=
\mu_e+\sigma_z r(q)Y_i(q),
\]
and
\[
p_i(q)
=
\Phi\left(
\frac{S_i(q)}
{\sigma_{\mathrm{post}}(q)}
\right),
\]
the condition
\[
p_i(q)\ge p^F(q)
\]
is equivalent to
\[
Y_i(q)\ge b_e(q),
\]
where
\[
b_e(q)
:=
\frac{
\sigma_{\mathrm{post}}(q)
\Phi^{-1}\left(p^F(q)\right)
-\mu_e
}{
\sigma_z r(q)
}.
\]
Thus $b_e(q)$ is the number of within-group standard deviations by which an
applicant's materials must exceed their experience-group mean to enter the
firm's screening list.

For $b\in\mathbb R$ and $r\in(0,1)$, let $(X,Y)$ be a standard bivariate
normal pair with correlation $r$, and define
\[
\psi_{e1}(b,r)
:=
\Pr(Y\ge b\mid X\ge a_e),
\]
and
\[
\psi_{e0}(b,r)
:=
\Pr(Y\ge b\mid X<a_e).
\]
The screening-list inclusion probability of a submitted type-$(e,k)$ applicant
is therefore
\[
\Pr\left(
p_i(q)\ge p^F(q)
\mid
e_i=e,\ k_i=k,\ d_i=1
\right)
=
\psi_{ek}\left(b_e(q),r(q)\right).
\]

\begin{prop}[Signal informativeness and relative experience thresholds]
\label{prop:two-adverse-shortlisting-channels}
Suppose $p^F(q)\in(0,1)$, $\sigma_m^2(q)>0$, and
$\sigma_m^2(q)$ is differentiable with
\[
\frac{d\sigma_m^2(q)}{dq}>0.
\]
For each $e\in\{0,1\}$, $b\in\mathbb R$, and $r\in(0,1)$,
\[
\frac{\partial\psi_{e1}(b,r)}{\partial r}>0,
\qquad
\frac{\partial\psi_{e0}(b,r)}{\partial r}<0.
\]
Moreover,
\[
b_0(q)-b_1(q)
=
\frac{\Delta_\mu}
{\sigma_z r(q)},
\]
and hence
\[
\frac{d}{dq}
\left[
b_0(q)-b_1(q)
\right]
=
-
\frac{\Delta_\mu r'(q)}
{\sigma_z r(q)^2}
>0.
\]
\end{prop}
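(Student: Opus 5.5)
The proposition has two parts: monotonicity of $\psi_{ek}$ in the correlation $r$, and a closed form for the threshold gap $b_0-b_1$. I would treat them separately.

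\textbf{Monotonicity in $r$.} Represent the standardized pair as $Y=rX+\sqrt{1-r^2}\,W$ with $W\sim\mathcal N(0,1)$ independent of $X$. Then
\[
\Pr(Y\ge b\mid X=x)=\Phi\!\left(\frac{rx-b}{\sqrt{1-r^2}}\right)=:g(x,r).
\]
Differentiating in $r$ gives
\[
\partial_r g(x,r)=\phi\!\left(\frac{rx-b}{\sqrt{1-r^2}}\right)\frac{x-rb}{(1-r^2)^{3/2}}.
\]
Rather than integrate this directly over $\{x\ge a_e\}$, whose sign is not immediate, I would use the unconditional constraint. Since $\Pr(Y\ge b)=1-\Phi(b)$ does not depend on $r$,
\[
\Pr(X\ge a)\,\psi_{e1}+\Pr(X<a)\,\psi_{e0}=1-\Phi(b).
\]
Hence $\partial_r\psi_{e0}$ has the opposite sign to $\partial_r\psi_{e1}$, and it suffices to show $\partial_r\Pr(Y\ge b,\,X\ge a)>0$.

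This is the standard fact that the bivariate normal orthant probability is strictly increasing in the correlation. By Plackett's identity,
\[
\partial_r\Pr(X\ge a,\,Y\ge b)=\varphi_2(a,b;r)>0,
\]
where $\varphi_2$ is the bivariate normal density with correlation $r$. I would either cite this or verify it in one line: integrate $\partial_r g$ against $\phi(x)$ over $[a,\infty)$ and observe that the integrand $\phi(x)\,\phi\!\left(\tfrac{rx-b}{\sqrt{1-r^2}}\right)\tfrac{x-rb}{(1-r^2)^{3/2}}$ is an exact derivative in $x$, namely $-\partial_x$ of $\phi(x)\,\phi\!\left(\tfrac{rx-b}{\sqrt{1-r^2}}\right)/\sqrt{1-r^2}$. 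The boundary term at $x=a$ is then $\varphi_2(a,b;r)$. Checking this antiderivative is the main obstacle; everything else is algebra. Dividing by $\Pr(X\ge a_e)>0$ gives $\partial_r\psi_{e1}>0$, and the identity above then gives $\partial_r\psi_{e0}<0$.

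\textbf{Threshold gap.} The numerator of $b_e(q)$ is $\sigma_{\mathrm{post}}\Phi^{-1}(p^F)-\mu_e$, so subtracting $b_1$ from $b_0$ leaves $(\mu_1-\mu_0)/(\sigma_z r)=\Delta_\mu/(\sigma_z r(q))$. Differentiating, since $\sigma_m^2$ is differentiable,
\[
r'(q)=-\tfrac12\,\sigma_z\,(\sigma_z^2+\sigma_m^2)^{-3/2}\,\frac{d\sigma_m^2}{dq}<0.
\]
The chain rule then gives
\[
\frac{d}{dq}\left[b_0(q)-b_1(q)\right]=-\frac{\Delta_\mu r'(q)}{\sigma_z r(q)^2}>0.
\]
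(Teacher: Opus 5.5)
Your proposal is correct and follows the paper's proof: both use Plackett's identity $\partial_r\Pr(X\ge a,Y\ge b)=\phi_2(a,b;r)>0$, get the sign for $\psi_{e0}$ from the fact that $\Pr(Y\ge b)$ does not depend on $r$, and obtain the threshold gap by cancelling the common term $\sigma_{\mathrm{post}}(q)\Phi^{-1}(p^F(q))$. The only additions are your self-contained check of Plackett's identity via the antiderivative $-\partial_x\bigl[\phi(x)\,\phi\bigl(\tfrac{rx-b}{\sqrt{1-r^2}}\bigr)/\sqrt{1-r^2}\bigr]$, which is correct, and your explicit formula for $r'(q)$; the paper simply cites the identity and asserts that $r'(q)<0$.
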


The first part of Proposition
\ref{prop:two-adverse-shortlisting-channels} isolates the effect of increased AI saturation on threshold clearing probability through changing signal informativeness while holding the standardized screening threshold fixed.
Because increasing AI saturation reduces signal informativeness, i.e., $r'(q)<0$, noisier application materials reduce the probability that a
compatible applicant clears a fixed threshold and increase this probability for an incompatible applicant. The second part identifies how AI saturation affects the threshold clearing probability by changing the relative experience-group thresholds. The standardized
threshold for inexperienced applicants is higher compared to experienced applicants,
and this difference grows as application materials become less informative. 

Inexperienced-compatible applicants are therefore the only group on the adverse side of both of these mechanisms: compatibility makes them vulnerable to the decline in signal
informativeness, while inexperience requires them to clear a higher relative screening threshold compared to experienced applicants.

\subsection{Market Failure}
\label{subsec:market-failure}

The preceding results show that inexperienced-compatible applicants are especially
exposed to the loss of individualized information at the shortlisting stage. We
now show that the firm's privately optimal screening decision can also be
socially inefficient.

The source of inefficiency is that the firm does not internalize the worker's surplus from a successful match. The firm receives value $v_F$ from hiring a compatible applicant, while total surplus from a compatible match is $v_F+v_A$. Recall from Proposition \ref{prop:optimal-screening} that the firm's private
screening cutoff is
\[
p^F(q)
=
\frac{c_s(q)}{v_F}.
\]
Define the following screening cutoff; in the next proposition, we show it is the cutoff a social planner would use:
\[
p^P(q)
:=
\frac{c_s(q)}{v_F+v_A}.
\]
We suppose throughout this subsection that $v_F>0$, $v_A>0$, and
$p^F(q)\in(0,1)$. Then it follows that
\[
0<p^P(q)<p^F(q)<1.
\]

\begin{prop}[Firm and social planner screening cutoffs]
\label{prop:private-social-cutoffs}
Fix $q$ and a realized applicant pool. The social planner's optimal screening rule is to order applicants by
decreasing posterior compatibility and include applicant $i$ in the screening
list if and only if
\[
p_i(q)\ge p^P(q).
\]
Consequently, applicant $i$ is included in the planner's screening list but not
the firm's screening list if and only if
\[
p^P(q)\le p_i(q)<p^F(q).
\]
\end{prop}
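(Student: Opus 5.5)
The plan is to reduce the planner's problem to the firm's problem from Proposition \ref{prop:optimal-screening}, with $v_F$ replaced by $v_F+v_A$. Fix $q$ and the realized pool; application costs are sunk at the screening stage, so they do not affect the planner's choice among screening policies. The planner's objective from screening is then
\[
(v_F+v_A)\Pr(\text{a compatible applicant is hired})-c_s(q)\,\mathbb E[\#\text{screened}].
\]
The firm's private objective has exactly the same form, with $v_F$ in place of $v_F+v_A$. Hiring is a consequence of screening: both firm and planner hire the first compatible applicant screened, since hiring adds $v_F$, respectively $v_F+v_A>0$. Hence both face the same binary-payoff Pandora problem and differ only in the prize value.

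The key steps are as follows. First, I will verify that the proof of Proposition \ref{prop:optimal-screening} uses only three facts:
\begin{enumerate}[(a)]
\item screening reveals $k_i$;
\item the value $V$ of a compatible hire is positive and common across applicants;
\item the $k_i$ are conditionally independent across applicants given $(e_i,m_i)$.
\end{enumerate}
Fact (c) holds because the triples $(e_i,z_i,\varepsilon_i)$ are independent across applicants, so the posteriors $p_i$ do not update from other applicants' screening outcomes.

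Second, I will apply that argument with $V=v_F+v_A$. If one prefers not to take it as a black box, I will redo the exchange/induction argument directly. Conditional on reaching a stage, screening applicant $i$ yields $Vp_i-c_s(q)$ plus a continuation value $(1-p_i)W$. Here $W$ is the continuation value on the remaining set, and it does not depend on the order of $i$. A pairwise interchange then shows that screening higher $p_i$ first is optimal. Finally, a backward-induction argument on the last applicant shows that one should stop exactly when $Vp_i<c_s(q)$ for all remaining $i$. The resulting cutoff is $p^P(q)=c_s(q)/(v_F+v_A)$, which lies in $(0,1)$ because $0<p^P<p^F<1$.

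Third, the "consequently" clause follows from the two threshold characterizations. Inclusion in the planner's list means $p_i\ge p^P$, and exclusion from the firm's list means $p_i<p^F$. Since $p^P<p^F$, the set of applicants in the planner's list but not the firm's is exactly $\{p^P\le p_i<p^F\}$.

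The main obstacle is the first step: justifying that the planner's problem inherits the structure of the firm's problem. There are two points to settle. One is that the planner does not want to stop or hire differently because of worker-side terms beyond $v_A$. The other is that ties at the threshold are handled consistently: with weak inequality, $Vp_i=c_s(q)$ gives indifference, so including such an applicant is optimal. I would address both by writing the planner's value recursion explicitly and noting that it is the firm's recursion with $v_F$ replaced by $v_F+v_A$.
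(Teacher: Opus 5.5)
Your proposal is correct and follows essentially the same route as the paper. Like the paper, you treat application costs as sunk, note that the planner's problem is the firm's problem from Proposition~\ref{prop:optimal-screening} with $v_F$ replaced by $v_F+v_A$, reuse the pairwise-interchange and marginal-value argument to obtain the cutoff $p^P(q)$, and read off the consequence from $p^P(q)<p^F(q)$. Your extra checks (conditional independence of the $k_i$ and tie-breaking at the threshold) are details the paper leaves implicit.
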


The difference between the private and social screening cutoffs is
\[
p^F(q)-p^P(q)
=
\frac{v_Ac_s(q)}
{v_F(v_F+v_A)}
>0.
\]
Thus the preceding shows that there is a nonempty range of applicants whom the planner would screen but
the firm would reject. Under Assumption \ref{assump:cs-increasing}, the gap between the private and social screening cutoffs increases with AI saturation:

\[
\frac{d}{dq}
\left[
p^F(q)-p^P(q)
\right]
=
\frac{v_Ac_s'(q)}
{v_F(v_F+v_A)}
>0.
\]

Higher effective screening costs therefore expand the range of posterior
compatibilities for which screening is socially valuable but privately
unprofitable.

\paragraph{Two screening-failure events.}

We first characterize two screening-failure events conditional on the firm's realized
pool of submitted applications. We eventually show that these screening-failure events occur with probability approaching one in the high-AI limit. Let
$\mathcal A(q)$ denote the submitted applicant pool. When
$\mathcal A(q)$ is nonempty, define
\[
p^{\max}(q)
:=
\max_{i\in\mathcal A(q)}p_i(q).
\]

This is the highest posterior compatibility among submitted applications.

\begin{prop}[No-hire screening failure]
\label{prop:no-hire-market-failure}
Suppose $\mathcal A(q)$ is nonempty and
\[
p^P(q)<p^{\max}(q)<p^F(q).
\]
Then the firm's screening list is empty, while the social planner's screening
list is nonempty.
\end{prop}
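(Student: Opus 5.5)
This follows directly from Proposition \ref{prop:optimal-screening} and Proposition \ref{prop:private-social-cutoffs}, applied to the realized pool $\mathcal A(q)$. We take as given that both screening lists are threshold sets: the firm's list is $\{i\in\mathcal A(q): p_i(q)\ge p^F(q)\}$ and the planner's list is $\{i\in\mathcal A(q): p_i(q)\ge p^P(q)\}$. The proof then needs only two comparisons between $p^{\max}(q)$ and these cutoffs.

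\emph{Firm's list is empty.} Every $i\in\mathcal A(q)$ satisfies $p_i(q)\le p^{\max}(q)<p^F(q)$. So no submitted applicant meets the firm's cutoff. By Proposition \ref{prop:optimal-screening}, the sequential rule stops before screening anyone, and the firm's screening list is empty.

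\emph{Planner's list is nonempty.} Since $\mathcal A(q)$ is nonempty and finite, the maximum $p^{\max}(q)$ is attained by some applicant $i^*\in\mathcal A(q)$. For this applicant, $p_{i^*}(q)=p^{\max}(q)>p^P(q)$. By Proposition \ref{prop:private-social-cutoffs}, $i^*$ is on the planner's list, so that list is nonempty. In fact $i^*$ lies in the gap region $p^P(q)\le p_{i^*}(q)<p^F(q)$ identified in that proposition.

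There is no real obstacle here. The only point worth stating explicitly is that the maximum is attained, which holds because the pool is finite and nonempty. We can also note that the hypothesis is nonvacuous: $0<p^P(q)<p^F(q)<1$ holds under the standing assumptions of this subsection.
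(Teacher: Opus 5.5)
Your proof is correct and follows the paper's argument exactly: every posterior is bounded by $p^{\max}(q)<p^F(q)$, so the firm's list is empty by Proposition \ref{prop:optimal-screening}, and an applicant attaining $p^{\max}(q)>p^P(q)$ is on the planner's list by Proposition \ref{prop:private-social-cutoffs}. Your explicit remark that the maximum is attained because the pool is finite and nonempty is a small point the paper leaves implicit.
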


Under the conditions of this result, the firm screens no applicants since
$p^{\max}(q)<p^F(q)$ and hence makes no hire. However, a social planner would
screen at least the applicant with posterior compatibility $p^{\max}(q)$ and
would hire that applicant if they are compatible. Thus the firm does not screen
anyone even though screening and potential hiring would increase expected total
surplus.

A more targeted failure occurs when the firm excludes all
inexperienced-compatible applicants from screening even though at least one such
applicant is socially worth screening. Let
\[
\mathcal A_{01}(q)
:=
\left\{
i\in\mathcal A(q):
e_i=0,\ k_i=1
\right\}
\]
denote the set of submitted inexperienced-compatible applicants. When
$\mathcal A_{01}(q)$ is nonempty, let
\[
p_{01}^{\max}(q)
:=
\max_{i\in\mathcal A_{01}(q)}p_i(q)
\]
be the highest posterior compatibility across the inexperienced-compatible applicants.

\begin{prop}[Inexperienced-compatible screening failure]
\label{prop:inexperienced-compatible-market-failure}
Suppose $\mathcal A_{01}(q)$ is nonempty and
\[
p^P(q)<p_{01}^{\max}(q)<p^F(q).
\]
Then no inexperienced-compatible applicant is included in the firm's screening
list, while at least one inexperienced-compatible applicant is included in the
social planner's screening list.
\end{prop}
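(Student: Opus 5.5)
The plan is to combine the list characterizations in Proposition \ref{prop:optimal-screening} and Proposition \ref{prop:private-social-cutoffs}, each applied to the fixed realized pool $\mathcal A(q)$. By Proposition \ref{prop:optimal-screening}, the firm's screening list consists exactly of the applicants with $p_i(q)\ge p^F(q)$. By Proposition \ref{prop:private-social-cutoffs}, the planner's list consists exactly of the applicants with $p_i(q)\ge p^P(q)$. The statement then reduces to two elementary set-membership claims about the subset $\mathcal A_{01}(q)$.

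\textbf{Firm's list.} Every $i\in\mathcal A_{01}(q)$ satisfies
\[
p_i(q)\le p^{\max}_{01}(q)<p^F(q),
\]
so $i$ fails the firm's cutoff. Hence the firm's screening list contains no inexperienced-compatible applicant.

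One point needs care. The firm's rule in Proposition \ref{prop:optimal-screening} is sequential and stops at the first compatible applicant. So "included in the screening list" should be read as membership in the set $\{i: p_i(q)\ge p^F(q)\}$, as the proposition's "equivalently" clause states. It should not be read as being actually screened on a given realization. Under this reading, an applicant below the cutoff is never screened under any realization.

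\textbf{Planner's list.} The set $\mathcal A_{01}(q)$ is nonempty and finite, so the maximum $p^{\max}_{01}(q)$ is attained by some $i^\ast\in\mathcal A_{01}(q)$. This applicant satisfies
\[
p_{i^\ast}(q)=p^{\max}_{01}(q)>p^P(q),
\]
so $i^\ast$ belongs to the planner's list.

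Combining the two parts, $i^\ast$ also satisfies $p^P(q)\le p_{i^\ast}(q)<p^F(q)$. So by the "consequently" part of Proposition \ref{prop:private-social-cutoffs}, $i^\ast$ lies in the planner's list but not the firm's, which gives the stated conclusion.

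There is no real obstacle here. The only subtle step is the interpretive one: making sure both propositions are invoked in their list-membership form rather than their realized-screening form, so that the conclusion holds realization by realization.
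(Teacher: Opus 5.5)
Your proof is correct and takes the same approach as the paper. You exclude every member of $\mathcal A_{01}(q)$ from the firm's list via $p_i(q)\le p_{01}^{\max}(q)<p^F(q)$, and you place a maximizer of $p_{01}^{\max}(q)$ on the planner's list via Proposition \ref{prop:private-social-cutoffs}; your remark that ``screening list'' means set membership rather than realized screening is a useful clarification the paper leaves implicit.
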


Under the conditions of Proposition
\ref{prop:inexperienced-compatible-market-failure}, inexperienced-compatible
candidates can therefore be fully excluded from screening even though screening
and potentially hiring an inexperienced-compatible candidate would increase
expected total surplus.

The preceding two propositions characterize the conditions under which the two screening failures occur. On their own, however, these
fixed-pool conditions do not imply that the failures are systematically likely to arise. We therefore turn next to the
high-AI limit, where we show that the same failure events can occur with
probability approaching one as individualized information disappears.

\paragraph{Screening failures in the high-AI limit.}

We now turn to the central result of this subsection. Although the preceding
propositions characterize when the two failures occur for a realized pool, the
high-AI limit shows that these failures can occur with probability approaching one. As application materials lose their informativeness, applicants'
posterior compatibility probabilities collapse toward their experience-group
priors. Screening therefore increasingly depends on coarse group-level
information rather than applicants' realized compatibility, allowing the
private and social screening thresholds to generate persistent screening
failures.

For each experience group, define the prior compatibility probability
\[
\pi_e
:=
\Pr(k_i=1\mid e_i=e)
=
\Phi\left(
\frac{\mu_e}{\sigma_z}
\right).
\]
Because $\mu_1>\mu_0$, $\pi_1>\pi_0$. Consider a sequence $\{q_n\}$ such that
\[
\sigma_m^2(q_n)\to\infty
\]
and
\[
p^F(q_n)\to p_\infty^F\in(0,1).
\]
Since
\[
p^P(q)
=
\frac{v_F}
{v_F+v_A}
p^F(q),
\]
the social cutoff converges to
\[
p_\infty^P
:=
\frac{v_F}
{v_F+v_A}
p_\infty^F,
\]
where $0<p_\infty^P<p_\infty^F<1$. The following lemma shows that when application
materials become uninformative, individual posterior compatibility converges to
the prior associated with the applicant's experience group. 

\begin{lem}[High-noise posterior pooling]
\label{lem:high-noise-posterior-limit}
For any fixed submitted applicant with realized type $(e,k)$,
\[
p_i(q_n)
\to
\pi_e
\]
in probability, conditional on
\[
e_i=e,\qquad k_i=k,\qquad d_i=1.
\]
Consequently, for any fixed finite submitted pool $\mathcal B$,
\[
\max_{i\in\mathcal B}
\left|
p_i(q_n)-\pi_{e_i}
\right|
\to0
\]
in probability, conditional on the realized types in $\mathcal B$.
\end{lem}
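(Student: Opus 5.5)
We argue directly from the closed form in Proposition \ref{prop:bayesian-reweighting}. For each $n$ with $\sigma_m^2(q_n)>0$,
\[
p_i(q_n)=\Phi\!\left(\frac{S_i(q_n)}{\sigma_{\mathrm{post}}(q_n)}\right),\qquad S_i(q_n)=\kappa(q_n)m_i+(1-\kappa(q_n))\mu_{e_i}.
\]
Since $\Phi$ is continuous, it suffices to show that $S_i(q_n)/\sigma_{\mathrm{post}}(q_n)\to\mu_e/\sigma_z$ in probability, conditional on $(e_i,k_i,d_i)=(e,k,1)$. The continuous mapping theorem then gives $p_i(q_n)\to\Phi(\mu_e/\sigma_z)=\pi_e$.

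The deterministic pieces are handled first. As $\sigma_m^2(q_n)\to\infty$ we have $\kappa(q_n)\to0$, and
\[
\sigma_{\mathrm{post}}^2(q_n)=\frac{\sigma_z^2}{1+\sigma_z^2/\sigma_m^2(q_n)}\to\sigma_z^2,
\]
so $(1-\kappa(q_n))\mu_e/\sigma_{\mathrm{post}}(q_n)\to\mu_e/\sigma_z$. The remaining term is $\kappa(q_n)m_i/\sigma_{\mathrm{post}}(q_n)$, and this is the step that needs care, because $m_i=z_i+\varepsilon_i$ has variance growing with $n$.

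Conditional on the type, we decompose $m_i=z_i+\varepsilon_i$. Here $\varepsilon_i\sim\mathcal N(0,\sigma_m^2(q_n))$ is independent of $z_i$, and hence of $k_i$. Application is independent of $z_i$ and $\varepsilon_i$ given $e_i$, because $\eta_i$ is. Therefore conditioning on $d_i=1$ and $k_i=k$ leaves the law of $z_i$ equal to the truncated normal given $e$ and $k$, which does not depend on $n$. That law has positive probability under the untruncated normal, so the conditional distribution is well defined and $z_i$ is tight. It follows that $\kappa(q_n)z_i\to0$ in probability. For the noise term,
\[
\operatorname{Var}\bigl(\kappa(q_n)\varepsilon_i\bigr)=\frac{\sigma_z^4\,\sigma_m^2(q_n)}{(\sigma_z^2+\sigma_m^2(q_n))^2}\le\frac{\sigma_z^4}{\sigma_m^2(q_n)}\to0,
\]
so Chebyshev's inequality gives $\kappa(q_n)\varepsilon_i\to0$ in probability. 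Because $\sigma_{\mathrm{post}}(q_n)$ is bounded away from zero for large $n$, we get $\kappa(q_n)m_i/\sigma_{\mathrm{post}}(q_n)\to0$ in probability. Combining this with the deterministic limits and Slutsky's lemma proves the first claim.

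For the finite pool $\mathcal B$, fix $\epsilon>0$ and apply a union bound:
\[
\Pr\Bigl(\max_{i\in\mathcal B}|p_i(q_n)-\pi_{e_i}|>\epsilon\Bigr)\le\sum_{i\in\mathcal B}\Pr\bigl(|p_i(q_n)-\pi_{e_i}|>\epsilon\bigr).
\]
Each summand tends to zero by the first part. This uses only that each applicant's marginal conditional law given the realized types in $\mathcal B$ equals its single-applicant conditional law, which holds because applicants are sampled independently. Since $|\mathcal B|$ is fixed, the sum tends to zero, which gives the stated uniform convergence in probability.
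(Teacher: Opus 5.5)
Your proposal is correct and follows essentially the same route as the paper: write $m_i=z_i+\varepsilon_i$, send $\kappa(q_n)z_i\to0$ because the conditional truncated-normal law of $z_i$ does not depend on $n$, and send $\kappa(q_n)\varepsilon_i\to0$ because its variance vanishes. You then use $\sigma_{\mathrm{post}}(q_n)\to\sigma_z$ with the continuous mapping theorem, and a union bound for the finite pool. The only differences are cosmetic: you use tightness rather than a finite second moment for the $z_i$ term, and you apply continuous mapping directly to the ratio $S_i(q_n)/\sigma_{\mathrm{post}}(q_n)$ rather than first showing $S_i(q_n)\to\mu_e$.
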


Note that the convergence in the above lemma holds even after conditioning on an applicant's realized compatibility: the
firm cannot recover that compatibility from increasingly noisy materials.

Now fix a finite submitted pool $\mathcal B$, viewed as given at the screening
stage, and condition on its realized applicant types. Define
\[
\pi_{\mathcal B}^{\max}
:=
\max_{i\in\mathcal B}\pi_{e_i}.
\]
Also define
\[
\mathcal B_{01}
:=
\left\{
i\in\mathcal B:
e_i=0,\ k_i=1
\right\},
\qquad
\mathcal B_1
:=
\left\{
i\in\mathcal B:
e_i=1
\right\}.
\]

\begin{cor}[High-noise screening failures]
\label{cor:high-noise-screening-failures}
The following statements hold along the sequence $\{q_n\}$.

\begin{enumerate}[(i)]
    \item If $p_\infty^P<\pi_{\mathcal B}^{\max}<p_\infty^F$, then
    \[
    \Pr\left(
    p^P(q_n)
    <
    \max_{i\in\mathcal B}p_i(q_n)
    <
    p^F(q_n)
    \right)
    \to1.
    \]

    \item Suppose $\mathcal B_{01}$ is nonempty. If $p_\infty^P<\pi_0< p_\infty^F$, then
    \[
    \Pr\left(
    p^P(q_n)
    <
    \min_{i\in\mathcal B_{01}}p_i(q_n)
    \le
    \max_{i\in\mathcal B_{01}}p_i(q_n)
    <
    p^F(q_n)
    \right)
    \to1.
    \]

    If, in addition, $\mathcal B_1$ is nonempty and $p_\infty^F<\pi_1$, then
    \[
    \Pr\left(
    \min_{i\in\mathcal B_1}p_i(q_n)
    >
    p^F(q_n)
    >
    \max_{i\in\mathcal B_{01}}p_i(q_n)
    \right)
    \to1.
    \]
\end{enumerate}
\end{cor}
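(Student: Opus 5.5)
The corollary follows directly from Lemma \ref{lem:high-noise-posterior-limit} combined with the convergence of the two cutoffs. Every event in the statement is a finite conjunction of strict inequalities. Each inequality compares a posterior $p_i(q_n)$, or a max/min of finitely many of them, with a deterministic sequence $p^P(q_n)$ or $p^F(q_n)$. Each limiting comparison is strict by hypothesis. So the plan is an $\varepsilon$-margin argument.

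First fix $\varepsilon>0$ smaller than one third of every relevant gap. In (i) these gaps are $\pi^{\max}_{\mathcal B}-p^P_\infty$ and $p^F_\infty-\pi^{\max}_{\mathcal B}$. In the first part of (ii) they are $\pi_0-p^P_\infty$ and $p^F_\infty-\pi_0$, and in the second part $\pi_1-p^F_\infty$ as well. Since $p^F(q_n)\to p^F_\infty$ and $p^P(q_n)=\frac{v_F}{v_F+v_A}p^F(q_n)\to p^P_\infty$ deterministically, there is an $N$ such that for $n\ge N$ both cutoffs lie within $\varepsilon$ of their limits.

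Next, let $E_n$ be the event $\max_{i\in\mathcal B}|p_i(q_n)-\pi_{e_i}|<\varepsilon$. By Lemma \ref{lem:high-noise-posterior-limit}, conditional on the realized types in $\mathcal B$, we have $\Pr(E_n)\to1$.

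On $E_n$ with $n\ge N$, the required inequalities follow from elementary bounds.
\begin{enumerate}[(i)]
\item The map $x\mapsto\max_i x_i$ is $1$-Lipschitz in sup norm. Hence $|\max_{i\in\mathcal B}p_i(q_n)-\pi^{\max}_{\mathcal B}|<\varepsilon$, which places $\max_i p_i(q_n)$ strictly between $p^P_\infty+2\varepsilon>p^P(q_n)$ and $p^F_\infty-2\varepsilon<p^F(q_n)$.
\item Every $i\in\mathcal B_{01}$ has $\pi_{e_i}=\pi_0$. So all of these $p_i(q_n)$ lie in $(\pi_0-\varepsilon,\pi_0+\varepsilon)$, and this interval sits strictly inside $(p^P(q_n),p^F(q_n))$. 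Likewise every $i\in\mathcal B_1$ has $p_i(q_n)>\pi_1-\varepsilon>p^F_\infty+\varepsilon>p^F(q_n)$. Combined with the bound $\max_{\mathcal B_{01}}p_i(q_n)<p^F(q_n)$ just obtained, this yields the second display. The middle inequality $\min\le\max$ is trivial.
\end{enumerate}
Thus each target event contains $E_n$ for $n\ge N$, and its probability tends to one.

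Almost nothing here is an obstacle once Lemma \ref{lem:high-noise-posterior-limit} is granted. The only point needing care is that the lemma delivers uniform convergence over the finite pool, via the union bound it already states. This is what lets a single event $E_n$ control the max, the min, and the cross-group comparisons simultaneously. I would also remark that conditioning on realized types (including $k_i$) is harmless, because the lemma is stated under exactly that conditioning.
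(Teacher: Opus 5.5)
Your proposal is correct and follows the paper's proof. Both arguments combine the uniform convergence over the finite pool from Lemma~\ref{lem:high-noise-posterior-limit} with the deterministic convergence of $p^P(q_n)$ and $p^F(q_n)$ and the strictness of the limiting inequalities; your $\varepsilon$-margin event $E_n$ just makes explicit the step the paper compresses into ``the strict inequalities therefore imply.''
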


Part (i) implies that, conditional on the submitted pool $\mathcal B$, the
no-hire screening failure in Proposition
\ref{prop:no-hire-market-failure} occurs with probability approaching one.
Part (ii) implies that every submitted inexperienced-compatible applicant lies
below the firm's cutoff but above the social cutoff with probability approaching
one. Thus the inexperienced-compatible screening failure in Proposition
\ref{prop:inexperienced-compatible-market-failure} occurs with probability
approaching one. Under the additional condition
\[
\pi_0
<
p_\infty^F
<
\pi_1,
\]
the firm continues to include experienced applicants in its screening list while
excluding all submitted inexperienced-compatible applicants. The limiting
outcome is therefore a screening failure concentrated on inexperienced
candidates rather than a complete shutdown of screening.

Taken together, these results show that the screening failures identified above
can arise with probability approaching one as AI saturation grows. As
application materials become uninformative, the firm can no longer distinguish
compatible from incompatible applicants within an experience group, and
posterior beliefs collapse toward group-level priors. The hiring process is
therefore governed increasingly by observable experience rather than realized
compatibility. Depending on the location of the experience-group priors relative
to the private and social screening cutoffs, this produces either a complete
shutdown of privately provided screening or continued screening of experienced
applicants alongside the systematic exclusion of inexperienced-compatible
applicants. 

\section{Multistage Hiring as a Response to AI-Induced Entry Barriers}
\label{sec:multistage}

The preceding section shows that AI saturation can raise entry barriers by making written application materials less informative and full screening more expensive. This section studies a firm response: adding a cheaper intermediate assessment before costly full screening. Examples include live skill tests, work-sample exercises, or other verifiable tasks that are harder for AI-generated application materials to mimic.

We show that the firm strictly benefits from adding an intermediate screening stage and that doing so can restore hiring opportunities for inexperienced-compatible applicants who are disadvantaged by degraded initial signals. These benefits become especially strong as AI saturation grows.

\subsection{Multistage Screening } 
\label{subsec:multistage-technology} 

After observing $(m_i,e_i)$ and forming the baseline posterior
\[
p_i(q)
:=
\Pr(k_i=1\mid m_i,e_i;q),
\]
the firm may pay an intermediate-assessment cost
\[
c_\ell(q)\in[0,c_s(q))
\]
to observe an additional signal $a_i$. After observing this signal, the firm's
posterior compatibility belief is
\[
\widetilde p_i(q)
:=
\Pr(k_i=1\mid m_i,e_i,a_i;q).
\]
The firm may then reject the applicant or pay the full screening cost $c_s(q)$,
which reveals compatibility perfectly.

Submitted applications remain available until the vacancy is filled or the firm
terminates search, and retaining an unassessed application in the pool is
costless. Let
\[
\Pi^{1S}(q)
\qquad\text{and}\qquad
\Pi^{MS}(q)
\]
denote the sets of feasible one-stage and multistage policies, respectively,
conditional on the realized submitted pool $\mathcal A(q)$ and the pre-assessment information $\left\{(m_i,e_i):i\in\mathcal A(q)\right\}$ observed by the firm. For any feasible policy $\pi$, let $U(\pi;q)$ denote the firm's
expected payoff under $\pi$, conditional on this information and net of all
intermediate-assessment and full-screening costs. Define
\[
U^{1S,*}(q)
:=
\sup_{\pi\in\Pi^{1S}(q)}U(\pi;q)
\]
and
\[
U^{MS,*}(q)
:=
\sup_{\pi\in\Pi^{MS}(q)}U(\pi;q).
\]

Because the firm can implement any one-stage policy by never using the
intermediate assessment,
\[
\Pi^{1S}(q)\subseteq\Pi^{MS}(q).
\]
Consequently,
\begin{equation}
U^{MS,*}(q)\ge U^{1S,*}(q).
\label{eq:multistage-weak-dominance}
\end{equation}
This inequality follows directly from nesting. The results in the next subsection therefore focus
on conditions under which the inequality is strict and the intermediate
assessment is used with positive probability.

The value of the intermediate assessment comes from its ability to refine the
firm's posterior before the full screening cost is incurred. By iterated
expectations,
\[
\mathbb E\left[
\widetilde p_i(q)
\mid m_i,e_i;q
\right]
=
p_i(q).
\]
Thus the refined posterior is a mean-preserving refinement of the baseline
posterior.

To isolate the value of this refinement, suppose the vacancy remains open and
applicant $i$ is the firm's only remaining option. If the firm's current
posterior compatibility belief is $x$, its optimal continuation payoff without
an intermediate assessment is
\[
f_q(x)
:=
\left(v_F x-c_s(q)\right)_+
=
v_F\left(x-p^F(q)\right)_+,
\]
where
\[
(y)_+:=\max\{y,0\}
\]
and
\[
p^F(q)=\frac{c_s(q)}{v_F}
\]
is the one-stage screening cutoff from Proposition
\ref{prop:optimal-screening}.

Because $f_q$ is convex, Jensen's inequality implies
\[
\mathbb E\left[
f_q\left(\widetilde p_i(q)\right)
\mid m_i,e_i;q
\right]
\ge
f_q\left(p_i(q)\right).
\]
This is the standard ``value-of-information" \citep{blackwell1953equivalent}: before accounting for the cost
of acquiring it, an additional posterior refinement cannot lower the firm's
optimal expected continuation payoff.

The firm administers the intermediate assessment to applicant $i$ whenever the expected value of the additional information is at least as large as the assessment cost:
\[
\mathbb E\left[
f_q\left(\widetilde p_i(q)\right)
\mid m_i,e_i;q
\right]
-
f_q\left(p_i(q)\right)
\ge c_\ell(q).
\]
The next subsection specializes the assessment to a binary signal and characterizes the applicants whose posterior compatibility lies below the one-stage hiring cutoff but for whom the firm would strictly prefer to administer the assessment if the applicant were the firm's only remaining option.

\subsection{A Binary Assessment and Multistage Screening}
\label{subsec:binary-assessment-middle-band}

To identify which below-cutoff applicants can benefit from the additional stage,
for simplicity we consider a binary intermediate assessment with outcome
\[
A_i\in\{H,L\},
\]
where $H$ is favorable and $L$ is unfavorable.

For each experience group
$e\in\{0,1\}$, assume
\[
\Pr(A_i=H\mid k_i=1,e_i=e,m_i;q)=\tau_e,
\]
and
\[
\Pr(A_i=H\mid k_i=0,e_i=e,m_i;q)=\ell_e,
\]
where
\[
0\le \ell_e<\tau_e\le1.
\]
Thus, conditional on experience and compatibility, application materials do not
provide additional information about the assessment outcome. The parameters
$\tau_e$ and $\ell_e$ may differ across experience groups but $\tau_e>\ell_e$ requires that within each experience group, compatible applicants are more likely to pass.

For an experience-$e$ applicant with pre-assessment posterior compatibility
$p\in(0,1)$, Bayes' rule gives the posterior compatibility probability following a favorable outcome:
\[
p_H^e(p)
:=
\frac{\tau_e p}
{\tau_e p+\ell_e(1-p)},
\]
and the posterior following an unfavorable outcome:
\[
p_L^e(p)
:=
\frac{(1-\tau_e)p}
{(1-\tau_e)p+(1-\ell_e)(1-p)}.
\]
Because $\tau_e>\ell_e$,
\[
p_H^e(p)>p>p_L^e(p).
\]

We next consider the continuation problem in which the vacancy remains open and
an applicant is the firm's only remaining option. Recall that the firm's
one-stage screening cutoff is
\[
p^F(q)
=
\frac{c_s(q)}{v_F}.
\]

For each experience group $e$, define the intermediate-assessment cutoff
\begin{align}
\label{eq: int cutoff}
\underline p_e^A(q)
:=
\frac{
c_\ell(q)/v_F+\ell_e p^F(q)
}{
\tau_e\bigl(1-p^F(q)\bigr)+\ell_e p^F(q)
}.
\end{align}

\begin{lem}[Binary intermediate-assessment cutoff]
\label{lem:binary-assessment-cutoff}
Fix $q$ such that $p^F(q)\in(0,1)$ and fix an experience group
$e\in\{0,1\}$. Consider an applicant with pre-assessment posterior
$p<p^F(q)$ who is the firm's only remaining option.

The firm weakly prefers to administer the intermediate assessment and proceed
to full screening only following $H$ if and only if
\[
p\ge \underline p_e^A(q).
\]
The preference is strict if the inequality above is strict.
\end{lem}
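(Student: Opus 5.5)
We compare the firm's expected continuation payoff from the candidate policy against its best alternative when the applicant is the only remaining option and $p<p^F(q)$. Since $p<p^F(q)$, the one-stage continuation value is $f_q(p)=v_F(p-p^F(q))_+=0$: without the assessment, the firm rejects. So the firm weakly prefers the policy "assess, then fully screen only after $H$" exactly when that policy's expected payoff is nonnegative.

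The payoff of this policy is
\[
-c_\ell(q)+\Pr(A_i=H)\bigl(v_F\,p_H^e(p)-c_s(q)\bigr).
\]
First compute $\Pr(A_i=H\mid m_i,e_i)=\tau_e p+\ell_e(1-p)$, using the assumption that, conditional on $(e_i,k_i)$, the assessment is independent of $m_i$. Multiplying by $p_H^e(p)$ gives $\Pr(A_i=H)\,p_H^e(p)=\tau_e p$. The payoff is then linear in $p$:
\[
v_F\tau_e p-c_s(q)\bigl(\tau_e p+\ell_e(1-p)\bigr)-c_\ell(q).
\]
Dividing by $v_F>0$ and writing $p^F=c_s/v_F$, the payoff is nonnegative iff
\[
p\bigl[\tau_e(1-p^F)+\ell_e p^F\bigr]\ge c_\ell/v_F+\ell_e p^F.
\]
The bracketed coefficient is strictly positive because $p^F\in(0,1)$ and $\tau_e>\ell_e\ge0$. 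Dividing by it gives exactly $p\ge\underline p_e^A(q)$, with strict inequality corresponding to a strictly positive payoff, i.e.\ strict preference over rejection.

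It remains to confirm that "screen only after $H$" is the right way to use the assessment, so that comparing its payoff with rejection is the relevant comparison. We compare it with the other continuations after the assessment:
\begin{itemize}
\item[(a)] Screening after $L$ is never optimal, since $p_L^e(p)<p<p^F$ gives $v_F p_L^e-c_s<0$.
\item[(b)] Screening after both outcomes would be worse than (a) by the same argument.
\item[(c)] Screening after neither outcome gives payoff $-c_\ell\le0$, which is weakly dominated by rejecting outright.
\item[(d)] Screening immediately without assessing gives $v_F p-c_s<0$, which is also dominated by rejection.
\end{itemize}
If $p_H^e(p)<p^F$, continuing after $H$ is itself unprofitable and the policy payoff is negative; this case is automatically excluded by the inequality, since a nonnegative policy payoff forces $v_Fp_H^e-c_s\ge c_\ell/\Pr(H)\ge0$. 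The ``only if'' direction therefore goes through as well.

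The main obstacle is interpretive rather than computational. We must pin down what "weakly prefers" is compared against: rejection, the value $0$, which is the best alternative since (c) and (d) are dominated. We also need the $\Pr(H)\,p_H^e=\tau_e p$ simplification so the condition becomes linear in $p$. After that the algebra is routine.
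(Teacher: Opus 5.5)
Your proposal is correct and follows the paper's proof essentially step for step. You compute $\Pr(A_i=H)=\tau_e p+\ell_e(1-p)$, use $\Pr(A_i=H)\,p_H^e(p)=\tau_e p$ to make the payoff of the ``assess, screen only after $H$'' policy linear in $p$, and compare it with the zero payoff from rejection, which gives exactly $p\ge\underline p_e^A(q)$, with strict inequality giving strict preference. Your check that the coefficient $\tau_e(1-p^F)+\ell_e p^F$ is positive, and your enumeration of the dominated alternatives, spell out points the paper leaves implicit, but the argument is the same.
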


This lemma suggests a simple continuation rule for screening: if the vacancy remains open and a remaining applicant has posterior compatibility
\[
p \in (\underline p_e^A(q), p^F(q)),
\]
then the firm administers the intermediate assessment. After a favorable outcome
\(H\), the applicant's posterior rises above the full-screening cutoff, so the
firm proceeds to full screening. After an unfavorable outcome \(L\), the applicant
remains below the cutoff and is rejected.

This rule does not characterize the globally optimal order of assessment
and screening when multiple applicants remain. In that problem, the firm's
decision may depend on the continuation values generated by its other options.
Nevertheless, this rule is useful because it identifies a set of below-cutoff applicants who are valuable whenever they become the only remaining option. It also provides a feasible multistage policy that can be used to prove
strict adoption of the multistage technology.

\begin{cor}[Strict ex ante value of multistage hiring]
\label{cor:single-firm-adopts-ms}
Suppose $\sigma_m^2(q)>0$ and that, for some $e\in\{0,1\}$,
\[
\lambda_e\alpha_e(q)>0
\]
and
\[
c_\ell(q)
<
v_F p^F(q)\bigl(1-p^F(q)\bigr)
(\tau_e-\ell_e).
\]
Then
\[
\mathbb E\left[
U^{MS,*}(q)-U^{1S,*}(q)
\right]
>0.
\]
Consequently, the firm strictly prefers access to the multistage technology ex
ante, and every ex ante optimal multistage policy administers the intermediate
assessment with positive probability.
\end{cor}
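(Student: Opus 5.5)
The plan is to exhibit an event of positive probability on which a specific feasible multistage policy strictly beats the optimal one-stage policy. Outside that event we use the nesting inequality \eqref{eq:multistage-weak-dominance}. Since $U^{MS,*}(q)-U^{1S,*}(q)\ge 0$ pointwise, it suffices to find an event $E$ with $\Pr(E)>0$ on which the difference is strictly positive.

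First I would show that the hypothesis on $c_\ell(q)$ is exactly $\underline p_e^A(q)<p^F(q)$. Substituting $p=p^F$ into \eqref{eq: int cutoff}, the inequality $\underline p_e^A<p^F$ reads $c_\ell/v_F+\ell_e p^F<p^F[\tau_e(1-p^F)+\ell_e p^F]$. This rearranges to $c_\ell<v_F p^F(1-p^F)(\tau_e-\ell_e)$. Hence the band $(\max\{\underline p_e^A(q),0\},p^F(q))$ is a nonempty open interval. By Proposition \ref{prop:bayesian-reweighting}, with $\sigma_m^2(q)>0$ the map $m_i\mapsto p_i(q)$ is a continuous, strictly increasing bijection onto $(0,1)$. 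The Gaussian signal has full support, so an experience-$e$ applicant's posterior lies in this band with positive probability.

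Next I would define $E$ as the event that the realized pool consists of exactly one submitted applicant, that this applicant has $e_i=e$, and that its posterior lies in the band. The pool has one applicant with probability at least $n\lambda_e\alpha_e(q)(1-\lambda_0\alpha_0(q)-\lambda_1\alpha_1(q))^{n-1}$. This is positive as long as the total application probability is below one. If instead the total application probability equals one (for example when $c_a=0$), I would take a pool in which every applicant's posterior lies in the band; that event also has positive probability. In either case every posterior in the pool is below $p^F$, so by Proposition \ref{prop:optimal-screening} the one-stage optimum screens nobody and $U^{1S,*}=0$.

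On $E$, in the single-applicant case, that applicant is the firm's only option. Lemma \ref{lem:binary-assessment-cutoff} with strict inequality then shows that assessing and fully screening only after $H$ yields a strictly positive payoff. In the all-in-band case, I would assess the applicants one at a time and screen after $H$. Each step has strictly positive conditional expected value by the same lemma, since retaining unassessed applications is costless. Hence $U^{MS,*}>0=U^{1S,*}$ on $E$, and integrating gives the strict inequality.

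For the final claim, suppose an ex ante optimal multistage policy never used the assessment, or used it with probability zero. Then it would be a one-stage policy almost surely, so its ex ante value would be at most $\mathbb E[U^{1S,*}]<\mathbb E[U^{MS,*}]$, contradicting optimality. The main obstacle is making $\Pr(E)>0$ rigorous in the degenerate participation cases and handling measurability of the suprema. I would sidestep the latter by comparing against the explicit policy rather than the supremum.
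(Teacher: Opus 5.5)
Your proof is correct, and it differs from the paper's mainly in which positive-probability event you use.

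\textbf{The paper's route.} The paper does not try to make the one-stage value vanish. On any pool that contains an experience-$e$ applicant with $\underline p_e^A(q)<p_i(q)<p^F(q)$, it runs the optimal one-stage policy first. If the vacancy is still open after the list is exhausted, it assesses that leftover applicant. This continuation is reached with conditional probability $\prod_j(1-p_j(q))>0$. By independence across applicants, her posterior is still $p_i(q)$, so Lemma~\ref{lem:binary-assessment-cutoff} gives a strict gain. The argument therefore works for every pool composition and needs no case analysis on participation.

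\textbf{Your route.} You restrict to pools on which $U^{1S,*}(q)=0$. The comparison then reduces directly to the sign in the lemma, and you never have to reason about continuation after the one-stage list. The cost is your split on whether $\lambda_0\alpha_0(q)+\lambda_1\alpha_1(q)<1$, and that split is avoidable. The event ``some experience-$e$ applicant lies in the band and every other submitted posterior is below $p^F(q)$'' has positive probability in all cases, by full support and independence.

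\textbf{A small repair in your fallback case.} If the full pool contains experience-$(1-e)$ applicants, a posterior in $(\underline p_e^A(q),p^F(q))$ need not exceed $\underline p_{1-e}^A(q)$. So your claim that every sequential assessment step has strictly positive value can fail for those applicants. Either of two fixes works:
\begin{enumerate}[(a)]
\item require all $n$ applicants to have experience $e$, which still has positive probability; or
\item assess only one experience-$e$ band applicant and ignore the rest, which already yields a strictly positive payoff against $U^{1S,*}(q)=0$.
\end{enumerate}
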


To establish Corollary \ref{cor:single-firm-adopts-ms}, observe that the condition on $c_\ell(q)$ in the statement of the corollary implies
\[
\underline p_e^A(q)<p^F(q).
\]
Because the Gaussian application signal has full support, conditional on
experience $e$ and submission, posterior compatibility places positive
probability on
\[
\left(
\underline p_e^A(q),p^F(q)
\right).
\]
Thus there is positive probability that the submitted pool contains an
experience-$e$ applicant who is rejected under one-stage hiring but is strictly
worth assessing if reached.

A feasible multistage policy can first replicate the optimal one-stage policy.
If no hire occurs and such an applicant remains available, the firm can then
administer the intermediate assessment. This policy strictly improves the firm's
payoff on a positive-probability event. Since the globally optimal multistage
policy performs at least as well as this feasible policy, access to the
intermediate assessment has strictly positive ex ante value.

\subsection{Implications for Screening Failure}
\label{subsec:multistage-screening-failure}

Section \ref{subsec:market-failure} showed that applicants satisfying
\[
p^P(q)<p_i(q)<p^F(q)
\]
are socially worth screening but are rejected under one-stage hiring. Lemma
\ref{lem:binary-assessment-cutoff} shows that under multistage hiring, an experience-$e$ applicant is strictly worth assessing when they are the firm's only remaining option
whenever
\[
\underline p_e^A(q)<p_i(q)<p^F(q).
\]
Thus, for applicants in experience group $e$, multistage hiring creates a
path to screening for those satisfying
\[
\max\left\{
p^P(q),\underline p_e^A(q)
\right\}
<
p_i(q)
<
p^F(q).
\]
In particular, if
\[
\underline p_e^A(q)\le p^P(q),
\]
then every experience-$e$ applicant who is strictly socially worth screening
but rejected under one-stage hiring is strictly worth assessing if they become
the firm's only remaining option.

\begin{cor}[Multistage hiring mitigates one-stage screening failures]
\label{cor:multistage-avoids-failures}
Fix $q$ and a realized submitted pool.

\begin{enumerate}[(i)]
    \item Suppose the one-stage process exhibits the no-hire screening failure
    in Proposition \ref{prop:no-hire-market-failure}. If there exists a
    submitted applicant $i$ such that
    \[
    \max\left\{
    p^P(q),\underline p_{e_i}^A(q)
    \right\}
    <
    p_i(q)
    <
    p^F(q),
    \]
    then
    \[
    U^{MS,*}(q)>U^{1S,*}(q)=0.
    \]
    Consequently, an optimal multistage policy administers the intermediate
    assessment with positive probability and proceeds to full screening with
    positive probability.

    \item Suppose the one-stage process exhibits the
    inexperienced-compatible screening failure in Proposition
    \ref{prop:inexperienced-compatible-market-failure}. If
    \[
    p_{01}^{\max}(q)>\underline p_0^A(q),
    \]
    then there exists an applicant $i\in\mathcal A_{01}(q)$ satisfying
    \[
    p^P(q)<p_i(q)<p^F(q)
    \]
    and
    \[
    p_i(q)>\underline p_0^A(q).
    \]
    At any history at which the vacancy remains open and applicant $i$ is the
    firm's only remaining option, an optimal multistage policy administers the
    intermediate assessment and proceeds to full screening following $H$.
\end{enumerate}
\end{cor}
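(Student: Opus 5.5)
For part (i), I will first pin down $U^{1S,*}(q)$. Under the no-hire failure of Proposition \ref{prop:no-hire-market-failure}, every submitted applicant has $p_j(q)\le p^{\max}(q)<p^F(q)$. Proposition \ref{prop:optimal-screening} then says the one-stage screening list is empty, so $U^{1S,*}(q)=0$. Next, I will build an explicit feasible multistage policy, call it $\pi^\ast$. It administers the intermediate assessment to the applicant $i$ in the hypothesis, fully screens $i$ after $A_i=H$, and otherwise stops. Because $i$ is in effect the only option this policy uses, Lemma \ref{lem:binary-assessment-cutoff} applies at $p=p_i(q)$. Since $p_i(q)>\underline p_{e_i}^A(q)$, it gives
\[
U(\pi^\ast;q)>f_q(p_i(q))=0.
\]
I would make this concrete by writing
\[
U(\pi^\ast;q)=v_F\tau p-c_\ell-(\tau p+\ell(1-p))c_s,
\]
with $\tau=\tau_{e_i}$, $\ell=\ell_{e_i}$, $p=p_i(q)$, and noting this is positive exactly when $p>\underline p_{e_i}^A(q)$. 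The same algebra is used in the lemma. It follows that $U^{MS,*}(q)\ge U(\pi^\ast;q)>0=U^{1S,*}(q)$.

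For the last claim of (i), take any optimal multistage policy. If it never used the assessment, it would be a one-stage policy, and its value would be at most $0$. So it must assess with positive probability. If it assessed but never fully screened, it would earn $-c_\ell\le 0$ on that path and could never hire, so its value would again be at most $0$. Hence it must proceed to full screening with positive probability. One caveat: if the supremum were not attained, I would phrase this as "any policy achieving a value above $0$". Finiteness of the pool and of the decision tree should guarantee attainment, and I will remark on this.

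For part (ii), the failure condition gives $p^P(q)<p_{01}^{\max}(q)<p^F(q)$. Let $i\in\mathcal A_{01}(q)$ attain the maximum, which exists because the pool is finite. Then $p^P(q)<p_i(q)<p^F(q)$. The extra hypothesis gives $p_i(q)>\underline p_0^A(q)$, and $e_i=0$. Now fix any history at which the vacancy is open and $i$ is the only remaining option. Along the optimal policy, the continuation problem is exactly the single-applicant problem of Lemma \ref{lem:binary-assessment-cutoff}. The history contains no information about $i$ beyond $(m_i,e_i)$: other applicants' screening outcomes are independent of $z_i$ given the pool information. So the current posterior is still $p_i(q)$.

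In that single-applicant problem there are only a few available actions: reject, screen directly, assess and then follow a contingent screening rule, or assess repeatedly. I assume the assessment is used at most once, or that repeat assessments are uninformative. Screening directly is worth $v_Fp-c_s<0$, and rejecting is worth $0$. After $L$, the posterior $p_L^0(p)<p<p^F(q)$, so rejecting is optimal there. Screening after $H$ beats rejecting after $H$ precisely when $p_H^0(p)\ge p^F(q)$, and this is implied by $p>\underline p_0^A(q)$. The strict part of the lemma then makes "assess, and screen only after $H$" the unique optimal continuation.

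The main obstacle is the justification that, at such a history, the firm's posterior about $i$ is still $p_i(q)$ and the continuation value is exactly the lemma's. This requires the conditional independence across applicants, which comes from the timing assumptions: independent draws and a common-value structure with no spillovers. I would state this explicitly before invoking Lemma \ref{lem:binary-assessment-cutoff}.
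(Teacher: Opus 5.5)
Your proposal is correct and follows essentially the same route as the paper. For (i), $U^{1S,*}(q)=0$ because the one-stage list is empty, and a feasible policy that assesses only applicant $i$ and screens after $H$ gives $U^{MS,*}(q)>0$ via Lemma \ref{lem:binary-assessment-cutoff}; for (ii), you take the maximizer of $p_i(q)$ over $\mathcal A_{01}(q)$ and apply the same lemma. Your extra details tighten steps the paper states only briefly: the explicit payoff formula, the observation that a policy which never fully screens cannot earn a positive payoff, and the check that the posterior on $i$ is unchanged at such histories.
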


The corollary above shows that multistage hiring can mitigate the market failures where no applicants are screened or where no inexperienced-compatible applicants are screened. 

For part (i), the firm can improve strictly on one-stage hiring by assessing the
applicant identified in the statement and otherwise terminating search. By
Lemma \ref{lem:binary-assessment-cutoff}, this feasible policy has strictly
positive expected payoff, whereas the optimal one-stage policy screens no one
and receives zero. Thus introducing the intermediate assessment restores screening that would otherwise not occur.

Part (ii) establishes a different form of mitigation. When an inexperienced-compatible applicant lies above the assessment cutoff but below the firm's one-stage screening cutoff, the applicant is excluded under one-stage hiring but becomes worth assessing if reached while the vacancy remains open. The intermediate assessment therefore creates a route to full screening that is absent under the one-stage process. The next subsection shows that such a history occurs with positive ex ante probability, implying a strictly positive gain in the applicant's hiring probability.

\subsection{Implications for Inexperienced-Compatible Applicants}
\label{subsec:multistage-inexperienced-compatible}

We now quantify the benefit of multistage hiring for
inexperienced-compatible applicants. These applicants are high-fit matches, but they are precisely the
group most exposed to the adverse effects of AI on screening in one-stage hiring. 

Let \(h_i^{1S}\) denote the indicator that applicant \(i\) is hired under the
one-stage technology, and let \(h_i^{MS,*}\) denote the indicator that applicant
\(i\) is hired under a globally optimal multistage policy. Probabilities are
evaluated ex ante over the other potential applicants, their application
decisions, the realized application pool, the firm's screening and assessment decisions, and
any intermediate-assessment signals.

For a submitted inexperienced-compatible applicant with baseline posterior \(p\),
define the hiring-probability gain from optimal multistage hiring by
\[
\begin{aligned}
G_{01}^*(p;q)
:={}&
\Pr\!\left(
h_i^{MS,*}=1
\mid e_i=0,\ k_i=1,\ d_i=1,\ p_i=p
\right) \\
&-
\Pr\!\left(
h_i^{1S}=1
\mid e_i=0,\ k_i=1,\ d_i=1,\ p_i=p
\right).
\end{aligned}
\]

The key observation is that an inexperienced-compatible applicant satisfying
\[
\underline p_0^A(q)<p_i(q)<p^F(q)
\]
has zero probability of being hired under one-stage hiring but has a strictly
positive path to hiring under an optimal multistage policy. In particular, if
the vacancy remains open until the applicant becomes the firm's only remaining
option, Lemma \ref{lem:binary-assessment-cutoff} implies that the firm strictly
prefers to administer the intermediate assessment. A favorable assessment then
leads to full screening and, because the applicant is compatible, to hiring.
Proposition \ref{prop:optimal-ms-01-helped-app} in Appendix
\ref{sec:app_6} formalizes this argument and gives an explicit positive lower
bound on the applicant's hiring-probability gain.

To measure how many inexperienced-compatible applicants fall into this region,
define
\[
\nu_{01}(q)
:=
\Pr\left(
\underline p_0^A(q)
<
p_i(q)
<
p^F(q)
\mid
e_i=0,\ k_i=1,\ d_i=1
\right).
\]
Thus \(\nu_{01}(q)\) is the share of submitted inexperienced-compatible
applicants who are rejected under one-stage hiring but have a strictly positive
path to hiring under multistage hiring. Whenever
\[
\underline p_0^A(q)<p^F(q),
\]
the full support of the Gaussian materials signal implies
\[
\nu_{01}(q)>0.
\]
Moreover, every applicant in this region obtains a strictly positive
hiring-probability gain under optimal multistage hiring. Corollary
\ref{cor:random-01-helped-app} in Appendix \ref{sec:app_6} states this finite-$q$
result formally.

We now turn to the main result of this subsection. In the high-AI limit, the
share of inexperienced-compatible applicants who benefit from multistage
hiring can converge to one. Recall that
\[
\pi_0
=
\Pr(k_i=1\mid e_i=0)
=
\Phi\left(
\frac{\mu_0}{\sigma_z}
\right)
\]
is the inexperienced-group prior compatibility probability.

\begin{cor}[Almost all inexperienced-compatible applicants benefit in the high-AI limit]
\label{cor:high-ai-most-01-helped}
Let $\{q_n\}$ satisfy
\[
\sigma_m^2(q_n)\to\infty
\]
and
\[
p^F(q_n)\to p_\infty^F\in(0,1).
\]
Suppose
\[
\limsup_{n\to\infty}
\underline p_0^A(q_n)
<
\pi_0
<
p_\infty^F.
\]
Then
\[
\nu_{01}(q_n)
\to1.
\]
Consequently,
\[
\Pr\left(
G_{01}^*\left(p_i(q_n);q_n\right)>0
\mid
e_i=0,\ k_i=1,\ d_i=1
\right)
\to1.
\]
\end{cor}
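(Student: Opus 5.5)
The argument has two parts. First we show that the baseline posterior concentrates at $\pi_0$, using Lemma \ref{lem:high-noise-posterior-limit}. Then we pass to the hiring-gain statement via the finite-$q$ result (Corollary \ref{cor:random-01-helped-app} / Proposition \ref{prop:optimal-ms-01-helped-app}).

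\emph{Choice of $\varepsilon$.} Set $\bar a:=\limsup_{n}\underline p_0^A(q_n)$. By hypothesis $\bar a<\pi_0<p_\infty^F$, so we can pick
\[
\varepsilon:=\tfrac12\min\{\pi_0-\bar a,\ p_\infty^F-\pi_0\}>0.
\]
For all sufficiently large $n$, the definition of limsup gives $\underline p_0^A(q_n)<\pi_0-\varepsilon$, and the convergence $p^F(q_n)\to p_\infty^F$ gives $p^F(q_n)>\pi_0+\varepsilon$.

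\emph{Concentration step.} On the event $\{|p_i(q_n)-\pi_0|<\varepsilon\}$ we therefore have $\underline p_0^A(q_n)<p_i(q_n)<p^F(q_n)$ for large $n$. Hence
\[
\nu_{01}(q_n)\ \ge\ \Pr\bigl(|p_i(q_n)-\pi_0|<\varepsilon \mid e_i=0,\ k_i=1,\ d_i=1\bigr).
\]
Lemma \ref{lem:high-noise-posterior-limit}, applied with type $(e,k)=(0,1)$, says that $p_i(q_n)\to\pi_0$ in probability under exactly this conditioning. So the right-hand side tends to one, and $\nu_{01}(q_n)\to1$.

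\emph{Hiring-gain step.} The paper states, and formalizes in Appendix \ref{sec:app_6}, that any submitted inexperienced-compatible applicant with $\underline p_0^A(q)<p_i(q)<p^F(q)$ has $G_{01}^*(p_i(q);q)>0$. The argument: under one-stage hiring such an applicant has hiring probability zero, since they are never on the screening list. Under an optimal multistage policy there is a positive-probability history in which every other submitted applicant is incompatible or absent and the vacancy stays open until $i$ is the only remaining option. At that history Lemma \ref{lem:binary-assessment-cutoff} gives strict assessment, and a favorable outcome (probability $\tau_0>0$ for a compatible applicant) leads to screening and hire. Hence
\[
\{\underline p_0^A(q_n)<p_i(q_n)<p^F(q_n)\}\subseteq\{G_{01}^*(p_i(q_n);q_n)>0\},
\]
so the conditional probability of the latter is at least $\nu_{01}(q_n)$, which tends to one.

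\emph{Main obstacle.} The delicate part is the second step. We need positivity of $G_{01}^*$ for an optimal, not just feasible, multistage policy, and uniformly over $p$ in the band. The difficulty is that the globally optimal policy might terminate search before reaching $i$, or might hire someone else on histories where a feasible policy would not. I would handle this by relying on the appendix result as stated. If a direct argument were needed, I would condition on the event that no other potential applicant applies. That event has positive probability when $F_e$ is not degenerate at zero; otherwise we use the event that all others are incompatible and below the threshold, which has positive probability by full Gaussian support. On this event any optimal policy faces $i$ as the sole option, and Lemma \ref{lem:binary-assessment-cutoff} forces assessment. The only requirement on $p$ is that it lie in the open band, so positivity holds pointwise in $p$, which is all the statement needs.
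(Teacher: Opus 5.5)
Your proposal is correct and follows the paper's proof essentially line for line: the same $\varepsilon$-band argument with Lemma~\ref{lem:high-noise-posterior-limit} gives $\nu_{01}(q_n)\to1$, and Proposition~\ref{prop:optimal-ms-01-helped-app} then turns membership in the band into $G_{01}^*>0$. In your fallback, the event that no other applicant applies can have probability zero whenever $\alpha_e(q)=1$ (e.g.\ $c_a(q)=0$, or $F_e$ supported below $V_e/c_a(q)$), which is not the same as $F_e$ being degenerate at zero; the event to use is the paper's, that all other $n-1$ potential applicants are incompatible, which has probability $\chi^{n-1}>0$.
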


As application materials become uninformative, the pre-assessment posterior of
a submitted inexperienced-compatible applicant converges to $\pi_0$. If
$\pi_0$ remains strictly above the assessment cutoff and strictly below the
one-stage screening cutoff, almost every submitted inexperienced-compatible
applicant is rejected under one-stage hiring but obtains a strictly positive
hiring-probability gain under optimal multistage hiring. Thus the benefit of the
additional screening stage is not confined to an exceptional set of applicants:
in the high-AI limit, it extends to almost the entire inexperienced-compatible
group.

When the high-noise screening-failure condition from Section
\ref{subsec:market-failure} also holds, i.e.,
\[
p_\infty^P<\pi_0<p_\infty^F,
\]
the intermediate stage benefits the same inexperienced-compatible applicants
who are socially worth screening but rejected under one-stage hiring.

\section{Conclusion}
\label{sec:conclusion}

AI-assisted job search changes the hiring market not only by affecting how many applications workers submit, but also by changing what firms can infer from those applications. In our model, less informative application materials lead firms to rely more heavily on applicants' experience. Inexperienced-compatible applicants are especially harmed as they lose the individualized evidence that could distinguish them from others with little experience. Because firms do not internalize workers' surplus from a successful match, they may also reject applicants who are socially worth screening. In very noisy, high-AI environments, these failures can become systematic: the firm may screen no one or screen only experienced applicants.

Multistage hiring can mitigate this problem. A sufficiently informative and inexpensive intermediate assessment allows firms to acquire new evidence of fit before committing to costly full screening. We identify conditions under which this option strictly increases firm payoffs and gives some inexperienced-compatible applicants rejected under one-stage hiring a positive-probability path to employment. As AI saturation becomes sufficiently high, this benefit extends to almost all such applicants. Expanding evaluation opportunities can therefore serve firms' interests as well as those of applicants otherwise excluded.

Our results highlight that easier access to applying is not the same as access to credible evaluation. Preserving access for high-potential workers without conventional experience therefore requires attention to the design of screening and hiring processes. As application signals deteriorate, multistage hiring processes that create additional opportunities to demonstrate fit can help restore market functioning by reopening evaluation and hiring opportunities that would otherwise disappear.

\bibliographystyle{plainnat}
\bibliography{job_search_refs}
\appendix
\section{Proofs for Section \ref{sec:characterization}}
\label{sec:app_4}

\begin{proof}[Proof of Proposition \ref{prop:bayesian-reweighting}]
Fix $q$ and a submitted applicant $i$ with experience $e_i=e$.
Because the application decision depends on $(e_i,\eta_i)$ and
$\eta_i$ is independent of $(z_i,\varepsilon_i)$ conditional on experience, application is
uninformative about match quality and the materials signal conditional
on experience. Thus,
\[
z_i\mid m_i,e_i=e,d_i=1;q
\overset{d}{=}
z_i\mid m_i,e_i=e;q.
\]

Under \eqref{eq:quality-prior} and \eqref{eq:materials-signal}, standard
Gaussian updating gives
\[
z_i\mid m_i,e_i=e;q
\sim
\mathcal N\left(
S_i(q),
\sigma_{\mathrm{post}}^2(q)
\right),
\]
where
\[
\sigma_{\mathrm{post}}^2(q)
=
\left(
\frac{1}{\sigma_z^2}
+
\frac{1}{\sigma_m^2(q)}
\right)^{-1}
=
\frac{\sigma_z^2\sigma_m^2(q)}
{\sigma_z^2+\sigma_m^2(q)}
\]
and
\begin{align*}
S_i(q)
&=
\sigma_{\mathrm{post}}^2(q)
\left(
\frac{\mu_e}{\sigma_z^2}
+
\frac{m_i}{\sigma_m^2(q)}
\right) \\
&=
\kappa(q)m_i+(1-\kappa(q))\mu_e,
\end{align*}
with
\[
\kappa(q)
=
\frac{\sigma_z^2}
{\sigma_z^2+\sigma_m^2(q)}.
\]

Let
\[
\Delta_\mu:=\mu_1-\mu_0>0.
\]
Since
\[
\mu_e=\mu_0+\Delta_\mu e,
\]
the posterior mean can equivalently be written as
\[
S_i(q)
=
(1-\kappa(q))\mu_0
+
\theta(q)e_i
+
\kappa(q)m_i,
\]
where
\[
\theta(q)
=
(1-\kappa(q))\Delta_\mu.
\]

Compatibility is the event $k_i=1$ if and only if $z_i\ge0$.
Therefore,
\begin{align*}
p_i(q)
&=
\Pr(k_i=1\mid m_i,e_i;q) \\
&=
\Pr(z_i\ge0\mid m_i,e_i;q) \\
&=
\Phi\left(
\frac{S_i(q)}
{\sigma_{\mathrm{post}}(q)}
\right).
\end{align*}
Because $\sigma_{\mathrm{post}}(q)>0$ and $\Phi$ is strictly increasing,
for any submitted applicants $i$ and $j$,
\[
p_i(q)\ge p_j(q)
\quad\Longleftrightarrow\quad
S_i(q)\ge S_j(q).
\]

Now consider $\widetilde q>q$. By Assumption
\ref{assump:sigma-increasing},
\[
\sigma_m^2(\widetilde q)>\sigma_m^2(q).
\]
It follows immediately that
\[
\kappa(\widetilde q)
=
\frac{\sigma_z^2}
{\sigma_z^2+\sigma_m^2(\widetilde q)}
<
\frac{\sigma_z^2}
{\sigma_z^2+\sigma_m^2(q)}
=
\kappa(q).
\]
Since $\Delta_\mu>0$,
\[
\theta(\widetilde q)
=
(1-\kappa(\widetilde q))\Delta_\mu
>
(1-\kappa(q))\Delta_\mu
=
\theta(q).
\]

Finally,
\begin{align*}
\frac{\theta(q)}{\kappa(q)}
&=
\Delta_\mu
\frac{1-\kappa(q)}{\kappa(q)} \\
&=
\frac{\Delta_\mu\,\sigma_m^2(q)}
{\sigma_z^2}.
\end{align*}
Assumption \ref{assump:sigma-increasing} therefore implies that
$\theta(q)/\kappa(q)$ is strictly increasing in $q$.
\end{proof}

\begin{proof}[Proof of Proposition \ref{prop:optimal-screening}]
Fix $q$ and a realized pool of $M$ submitted applications. Write
\[
V:=v_F
\qquad\text{and}\qquad
c:=c_s(q).
\]
By assumption, $V>0$. Conditional on the observed application
information, applicant $i$ is compatible with probability $p_i(q)$.
Because applicant primitives are independent across applicants,
compatibility realizations are independent conditional on the
firm's observed information.

We first establish the optimal ordering. Consider two applicants $i$
and $j$ who are screened consecutively, conditional on the vacancy
remaining open when the firm reaches them. The probability that at
least one of the two applicants is compatible, and the probability
that both applicants are incompatible, do not depend on their order.
Thus the expected hiring payoff and the continuation value after both
applicants fail are the same under either order.

If the firm screens $i$ before $j$, its expected screening cost over
these two applicants is
\[
c+(1-p_i(q))c.
\]
If it screens $j$ before $i$, the corresponding expected cost is
\[
c+(1-p_j(q))c.
\]
Hence the expected-payoff difference between placing $i$ before $j$
and placing $j$ before $i$ is
\[
c\bigl(p_i(q)-p_j(q)\bigr).
\]
Therefore,
\[
p_i(q)\ge p_j(q)
\]
implies that placing $i$ weakly before $j$ is optimal. Repeated
pairwise interchanges imply that applicants can be ordered so that
\[
p_1(q)\ge p_2(q)\ge\cdots\ge p_M(q).
\]

Now suppose the firm screens the first $L$ applicants in this order,
unless it discovers a compatible applicant earlier. Its expected
payoff is
\[
U_L(q)
=
V\left[
1-\prod_{j=1}^{L}(1-p_j(q))
\right]
-
c\sum_{\ell=1}^{L}
\prod_{j=1}^{\ell-1}(1-p_j(q)),
\]
where the empty product is equal to one. The first term is the firm's
expected hiring payoff, and the second is its expected total screening
cost.

The marginal payoff from adding applicant $L$ to the end of the
screening list is
\begin{align*}
U_L(q)-U_{L-1}(q)
&=
\prod_{j=1}^{L-1}(1-p_j(q))
\left[
Vp_L(q)-c
\right].
\end{align*}
The product preceding the bracket is the probability that applicant
$L$ is reached. It is nonnegative. Thus applicant $L$ is weakly worth
including if and only if
\[
Vp_L(q)\ge c.
\]
Substituting $V=v_F$ and $c=c_s(q)$ gives
\[
p_L(q)
\ge
\frac{c_s(q)}{v_F}
=
p^F(q).
\]

Because applicants are ordered by decreasing posterior compatibility,
once an applicant fails this inequality, every remaining applicant
also fails it. Hence the firm optimally screens applicants sequentially
in decreasing order of posterior compatibility until either it finds
a compatible applicant or no remaining applicant satisfies
\[
p_i(q)\ge p^F(q).
\]
At equality the firm is indifferent; the proposition adopts the
tie-breaking convention that an indifferent applicant is included in
the screening list.

Finally, under Assumption \ref{assump:cs-increasing}, for every
$\widetilde q>q$,
\[
c_s(\widetilde q)>c_s(q).
\]
Since $v_F>0$,
\[
p^F(\widetilde q)
=
\frac{c_s(\widetilde q)}{v_F}
>
\frac{c_s(q)}{v_F}
=
p^F(q).
\]
\end{proof}

\section{Proofs for Section \ref{sec:workforce-entry}}
\label{sec:app_5}

\begin{proof}[Proof of Proposition \ref{prop:mean-score-effects}]
Because applicants do not observe $z_i$ before applying and application costs
are independent of $(z_i,\varepsilon_i)$ conditional on experience,
conditioning on submission does not change the distribution of $(z_i,m_i)$
within an experience group. Hence
\[
\mathbb E[m_i\mid e_i=e,k_i=k,d_i=1]
=
\mathbb E[z_i\mid e_i=e,k_i=k]
=
\mu_{ek},
\]
where we use $\mathbb E[\varepsilon_i]=0$. Therefore,
\[
\bar S_{ek}(q)
=
\kappa(q)\mu_{ek}
+
(1-\kappa(q))\mu_e.
\]
For $q<\widetilde q$,
\begin{align*}
\Delta\bar S_{ek}(q,\widetilde q)
&=
\bar S_{ek}(\widetilde q)-\bar S_{ek}(q)\\
&=
\left[
\kappa(\widetilde q)-\kappa(q)
\right]
\left(
\mu_{ek}-\mu_e
\right).
\end{align*}

It remains to establish the ordering of these changes. Define
\[
t_e:=\frac{\mu_e}{\sigma_z}.
\]
Since $z_i\mid e_i=e\sim\mathcal N(\mu_e,\sigma_z^2)$ and compatibility is
the event $z_i\ge0$, the conditional means are
\[
\mu_{e1}
=
\mu_e
+
\sigma_z
\frac{\phi(t_e)}{\Phi(t_e)}
\]
and
\[
\mu_{e0}
=
\mu_e
-
\sigma_z
\frac{\phi(t_e)}{\Phi(-t_e)}.
\]
Thus
\[
\mu_{e1}-\mu_e
=
\sigma_z
\frac{\phi(t_e)}{\Phi(t_e)}
>0
\]
and
\[
\mu_{e0}-\mu_e
=
-\sigma_z
\frac{\phi(t_e)}{\Phi(-t_e)}
<0.
\]

The inverse Mills ratio
\[
\frac{\phi(t)}{\Phi(t)}
\]
is strictly decreasing in $t$, while
\[
\frac{\phi(t)}{\Phi(-t)}
\]
is strictly increasing in $t$. Since $\mu_1>\mu_0$, we have $t_1>t_0$ and
therefore
\[
\mu_{01}-\mu_0
>
\mu_{11}-\mu_1
>
0,
\]
while
\[
\mu_{10}-\mu_1
<
\mu_{00}-\mu_0
<
0.
\]

By Assumption \ref{assump:sigma-increasing},
\[
\kappa(\widetilde q)-\kappa(q)<0.
\]
Multiplying the preceding inequalities by this negative quantity gives
\[
\Delta\bar S_{01}(q,\widetilde q)
<
\Delta\bar S_{11}(q,\widetilde q)
<
0
<
\Delta\bar S_{00}(q,\widetilde q)
<
\Delta\bar S_{10}(q,\widetilde q),
\]
as required.
\end{proof}

\begin{proof}[Proof of Proposition
\ref{prop:two-adverse-shortlisting-channels}]
Fix an experience group $e$. Recall that
\[
X_i
=
\frac{z_i-\mu_e}{\sigma_z},
\qquad
Y_i(q)
=
\frac{m_i-\mu_e}
{\sqrt{\sigma_z^2+\sigma_m^2(q)}}.
\]
Conditional on $e_i=e$, $(X_i,Y_i(q))$ is standard bivariate normal with
correlation
\[
r(q)
=
\frac{\sigma_z}
{\sqrt{\sigma_z^2+\sigma_m^2(q)}}.
\]

For $a,b\in\mathbb R$ and $r\in(-1,1)$, let
\[
\Phi_2(a,b;r)
=
\Pr(X\le a,Y\le b)
\]
denote the standard bivariate normal CDF, and let
$\phi_2(a,b;r)$ denote its density. The standard derivative identity
\[
\frac{\partial}{\partial r}
\Phi_2(a,b;r)
=
\phi_2(a,b;r)>0
\]
implies
\[
\frac{\partial}{\partial r}
\Pr(X\ge a_e,Y\ge b)
=
\phi_2(a_e,b;r)>0.
\]
Since $\Pr(X\ge a_e)$ does not depend on $r$,
\[
\frac{\partial\psi_{e1}(b,r)}{\partial r}
=
\frac{
\phi_2(a_e,b;r)
}{
\Pr(X\ge a_e)
}
>0.
\]

Similarly,
\[
\Pr(X<a_e,Y\ge b)
=
\Pr(Y\ge b)
-
\Pr(X\ge a_e,Y\ge b).
\]
The first term is independent of $r$, so
\[
\frac{\partial\psi_{e0}(b,r)}{\partial r}
=
-
\frac{
\phi_2(a_e,b;r)
}{
\Pr(X<a_e)
}
<0.
\]

We next derive the relative standardized thresholds. By the definition of
$Y_i(q)$,
\[
m_i
=
\mu_e
+
\sqrt{\sigma_z^2+\sigma_m^2(q)}\,Y_i(q).
\]
Because
\[
\kappa(q)
=
\frac{\sigma_z^2}
{\sigma_z^2+\sigma_m^2(q)}
\]
and
\[
r(q)
=
\frac{\sigma_z}
{\sqrt{\sigma_z^2+\sigma_m^2(q)}},
\]
the posterior-mean score can be written as
\[
S_i(q)
=
\mu_e+\sigma_z r(q)Y_i(q).
\]
The screening condition
\[
p_i(q)\ge p^F(q)
\]
is equivalent to
\[
S_i(q)
\ge
\sigma_{\mathrm{post}}(q)
\Phi^{-1}\left(p^F(q)\right).
\]
Therefore it is equivalent to
\[
Y_i(q)\ge b_e(q),
\]
where
\[
b_e(q)
=
\frac{
\sigma_{\mathrm{post}}(q)
\Phi^{-1}\left(p^F(q)\right)-\mu_e
}{
\sigma_zr(q)
}.
\]
The term involving the posterior cutoff is common across experience groups, so
\[
b_0(q)-b_1(q)
=
\frac{\mu_1-\mu_0}{\sigma_zr(q)}
=
\frac{\Delta_\mu}{\sigma_zr(q)}.
\]
Differentiating gives
\[
\frac{d}{dq}
\left[
b_0(q)-b_1(q)
\right]
=
-
\frac{\Delta_\mu r'(q)}
{\sigma_zr(q)^2}.
\]
Assumption \ref{assump:sigma-increasing} implies $r'(q)<0$, and
$\Delta_\mu>0$, so
\[
\frac{d}{dq}
\left[
b_0(q)-b_1(q)
\right]>0.
\]
\end{proof}

\begin{proof}[Proof of Proposition \ref{prop:private-social-cutoffs}]
At the screening stage, application costs are sunk. A compatible hire generates
total surplus
\[
v_F+v_A.
\]
Thus the planner's screening problem has exactly the same structure as the
firm's problem in Proposition \ref{prop:optimal-screening}, except that the
value of discovering a compatible applicant is $v_F+v_A$ rather than
$v_F$.

The same pairwise-interchange argument therefore implies that the planner screens
applicants in decreasing order of posterior compatibility. Conditional on
reaching applicant $i$, the expected marginal social value of screening is
\[
(v_F+v_A)p_i(q)-c_s(q).
\]
Hence the planner screens applicant $i$ if and only if
\[
p_i(q)
\ge
\frac{c_s(q)}
{v_F+v_A}
=
p^P(q).
\]
Because $v_A>0$,
\[
p^P(q)
=
\frac{c_s(q)}{v_F+v_A}
<
\frac{c_s(q)}{v_F}
=
p^F(q).
\]
Thus an applicant is included in the planner's screening list but not the
firm's if and only if
\[
p^P(q)\le p_i(q)<p^F(q).
\]
\end{proof}

\begin{proof}[Proof of Proposition \ref{prop:no-hire-market-failure}]
Suppose
\[
p^P(q)<p^{\max}(q)<p^F(q).
\]
Since every submitted applicant satisfies
\[
p_i(q)\le p^{\max}(q)<p^F(q),
\]
Proposition \ref{prop:optimal-screening} implies that the firm's screening list
is empty.

On the other hand, an applicant attaining $p^{\max}(q)$ satisfies
\[
p^{\max}(q)>p^P(q).
\]
By Proposition \ref{prop:private-social-cutoffs}, the social planner includes
this applicant in its screening list. Hence the planner's screening list is
nonempty while the firm's is empty.
\end{proof}

\begin{proof}[Proof of Proposition
\ref{prop:inexperienced-compatible-market-failure}]
Suppose
\[
p^P(q)<p_{01}^{\max}(q)<p^F(q).
\]
Every submitted inexperienced-compatible applicant satisfies
\[
p_i(q)\le p_{01}^{\max}(q)<p^F(q),
\]
so none is included in the firm's screening list.

An applicant attaining $p_{01}^{\max}(q)$ satisfies
\[
p_{01}^{\max}(q)>p^P(q),
\]
so Proposition \ref{prop:private-social-cutoffs} implies that this applicant is
included in the planner's screening list. Hence at least one submitted
inexperienced-compatible applicant is socially worth screening but excluded by
the firm.
\end{proof}

\begin{proof}[Proof of Lemma \ref{lem:high-noise-posterior-limit}]
Fix an applicant with experience $e_i=e$ and realized compatibility
$k_i=k$. Recall that
\[
S_i(q)
=
\kappa(q)m_i+(1-\kappa(q))\mu_e
=
\mu_e+\kappa(q)(z_i-\mu_e)+\kappa(q)\varepsilon_i.
\]
Along a sequence $q_n$ such that
\[
\sigma_m^2(q_n)\to\infty,
\]
we have
\[
\kappa(q_n)
=
\frac{\sigma_z^2}
{\sigma_z^2+\sigma_m^2(q_n)}
\to0.
\]

Conditional on $(e_i=e,k_i=k)$, $z_i$ has a truncated normal distribution and
therefore has finite second moment. Hence
\[
\kappa(q_n)(z_i-\mu_e)\to0
\]
in probability.

Also,
\[
\operatorname{Var}
\left(
\kappa(q_n)\varepsilon_i
\right)
=
\kappa(q_n)^2\sigma_m^2(q_n)
=
\frac{
\sigma_z^4\sigma_m^2(q_n)
}{
\left(
\sigma_z^2+\sigma_m^2(q_n)
\right)^2
}
\to0.
\]
Therefore,
\[
\kappa(q_n)\varepsilon_i\to0
\]
in probability, and consequently
\[
S_i(q_n)\to\mu_e
\]
in probability conditional on $(e_i=e,k_i=k)$.

Moreover,
\[
\sigma_{\mathrm{post}}^2(q_n)
=
\frac{
\sigma_z^2\sigma_m^2(q_n)
}{
\sigma_z^2+\sigma_m^2(q_n)
}
\to
\sigma_z^2.
\]
Since
\[
p_i(q_n)
=
\Phi\left(
\frac{S_i(q_n)}
{\sigma_{\mathrm{post}}(q_n)}
\right),
\]
the continuous mapping theorem gives
\[
p_i(q_n)
\to
\Phi\left(
\frac{\mu_e}{\sigma_z}
\right)
=
\pi_e
\]
in probability.

Application is independent of $(z_i,m_i)$ conditional on experience, so the
same convergence holds conditional on $d_i=1$.

For any fixed finite submitted pool $\mathcal B$,
\[
\max_{i\in\mathcal B}
\left|
p_i(q_n)-\pi_{e_i}
\right|
\to0
\]
in probability by a union bound over the finitely many applicants.
\end{proof}

\begin{proof}[Proof of Corollary \ref{cor:high-noise-screening-failures}]
By Lemma \ref{lem:high-noise-posterior-limit}, conditional on the realized
types in the fixed finite submitted pool $\mathcal B$,
\[
\max_{i\in\mathcal B}
\left|
p_i(q_n)-\pi_{e_i}
\right|
\to0
\]
in probability.

For part (i), it follows that
\[
\max_{i\in\mathcal B}p_i(q_n)
\to
\pi_{\mathcal B}^{\max}
\]
in probability. By assumption,
\[
p_\infty^P
<
\pi_{\mathcal B}^{\max}
<
p_\infty^F,
\]
while
\[
p^P(q_n)\to p_\infty^P,
\qquad
p^F(q_n)\to p_\infty^F.
\]
The strict inequalities therefore imply
\[
\Pr\left(
p^P(q_n)
<
\max_{i\in\mathcal B}p_i(q_n)
<
p^F(q_n)
\right)
\to1.
\]
Proposition \ref{prop:no-hire-market-failure} then gives the no-hire screening
failure with probability approaching one.

For part (ii), every applicant in $\mathcal B_{01}$ has experience $e=0$.
Hence
\[
\max_{i\in\mathcal B_{01}}
\left|
p_i(q_n)-\pi_0
\right|
\to0
\]
in probability. If
\[
p_\infty^P<\pi_0<p_\infty^F,
\]
then
\[
\Pr\left(
p^P(q_n)
<
\min_{i\in\mathcal B_{01}}p_i(q_n)
\le
\max_{i\in\mathcal B_{01}}p_i(q_n)
<
p^F(q_n)
\right)
\to1.
\]
Thus every submitted inexperienced-compatible applicant lies below the firm's
cutoff and above the planner's cutoff with probability approaching one.

Finally, if $\mathcal B_1$ is nonempty and
\[
p_\infty^F<\pi_1,
\]
then
\[
\min_{i\in\mathcal B_1}p_i(q_n)\to\pi_1
\]
in probability. Combining this convergence with
\[
\max_{i\in\mathcal B_{01}}p_i(q_n)\to\pi_0
\]
and
\[
\pi_0<p_\infty^F<\pi_1
\]
gives
\[
\Pr\left(
\min_{i\in\mathcal B_1}p_i(q_n)
>
p^F(q_n)
>
\max_{i\in\mathcal B_{01}}p_i(q_n)
\right)
\to1.
\]
\end{proof}

\section{Proofs and Supporting Results for Section \ref{sec:multistage}}
\label{sec:app_6}

\begin{proof}[Proof of Lemma \ref{lem:binary-assessment-cutoff}]
Fix $q$, experience group $e$, and a pre-assessment posterior
\[
p<p^F(q).
\]
Write
\[
V:=v_F,
\qquad
p^F(q)=\frac{c_s(q)}{V}.
\]

The probability of a favorable assessment outcome is
\[
\Pr(A_i=H\mid p,e)
=
\tau_ep+\ell_e(1-p).
\]
Bayes' rule gives
\[
p_H^e(p)
=
\frac{\tau_ep}
{\tau_ep+\ell_e(1-p)}
\]
and
\[
p_L^e(p)
=
\frac{(1-\tau_e)p}
{(1-\tau_e)p+(1-\ell_e)(1-p)}.
\]
Because $\tau_e>\ell_e$,
\[
p_H^e(p)>p>p_L^e(p).
\]
Since $p<p^F(q)$, it follows immediately that
\[
p_L^e(p)<p^F(q),
\]
so the firm never proceeds to full screening following $L$ in the
one-applicant continuation problem.

Consider the policy that administers the intermediate assessment and proceeds
to full screening only following $H$. Its expected payoff is
\begin{align*}
&-c_\ell(q)
+
\Pr(A_i=H\mid p,e)
\left[
Vp_H^e(p)-c_s(q)
\right] \\
={}&
-c_\ell(q)
+
\left[
\tau_ep+\ell_e(1-p)
\right]
\left[
V
\frac{\tau_ep}
{\tau_ep+\ell_e(1-p)}
-c_s(q)
\right]\\
={}&
-c_\ell(q)
+
V\tau_ep
-
c_s(q)
\left[
\tau_ep+\ell_e(1-p)
\right].
\end{align*}
Substituting
\[
c_s(q)=Vp^F(q)
\]
gives
\[
-c_\ell(q)
+
V
\left[
p\Bigl(
\tau_e(1-p^F(q))
+
\ell_ep^F(q)
\Bigr)
-
\ell_ep^F(q)
\right].
\]
This payoff is nonnegative if and only if
\[
p
\ge
\frac{
c_\ell(q)/V+\ell_ep^F(q)
}{
\tau_e(1-p^F(q))+\ell_ep^F(q)
}
=
\underline p_e^A(q).
\]
It is strictly positive if and only if
\[
p>\underline p_e^A(q).
\]

For any $p\ge\underline p_e^A(q)$, the preceding inequality also implies that
a favorable assessment raises the posterior above the full-screening cutoff,
so screening following $H$ is optimal. Thus, among applicants satisfying
$p<p^F(q)$, the firm weakly prefers assessment if and only if
\[
p\ge\underline p_e^A(q),
\]
with strict preference when
\[
\underline p_e^A(q)<p<p^F(q).
\]

Finally, there exist below-cutoff applicants for whom assessment is strictly
preferred if and only if
\[
\underline p_e^A(q)<p^F(q).
\]
Substituting the definition of $\underline p_e^A(q)$ and rearranging,
\begin{align*}
\underline p_e^A(q)<p^F(q)
\quad\Longleftrightarrow\quad
\frac{c_\ell(q)}{V}
&<
p^F(q)(1-p^F(q))(\tau_e-\ell_e).
\end{align*}
Multiplying by $V=v_F$ gives
\[
c_\ell(q)
<
v_Fp^F(q)(1-p^F(q))
(\tau_e-\ell_e),
\]
as required.
\end{proof}

\begin{proof}[Proof of Corollary \ref{cor:single-firm-adopts-ms}]
Suppose that for some experience group $e$,
\[
\lambda_e\alpha_e(q)>0
\]
and
\[
c_\ell(q)
<
v_Fp^F(q)(1-p^F(q))
(\tau_e-\ell_e).
\]
By Lemma \ref{lem:binary-assessment-cutoff},
\[
\underline p_e^A(q)<p^F(q).
\]

Conditional on $e_i=e$ and $d_i=1$, the application-materials signal has full
support on $\mathbb R$. Moreover, posterior compatibility
\[
p_i(q)
=
\Pr(k_i=1\mid m_i,e_i=e;q)
\]
is continuous and strictly increasing in $m_i$, with limits zero and one as
$m_i\to-\infty$ and $m_i\to\infty$, respectively. Hence
\[
\Pr\left(
\underline p_e^A(q)
<
p_i(q)
<
p^F(q)
\mid
e_i=e,d_i=1
\right)
>0.
\]

Because $\lambda_e\alpha_e(q)>0$, there is positive probability that the
submitted pool contains such an applicant. Consider a feasible multistage policy
that first replicates the optimal one-stage policy. If the vacancy remains open
after the one-stage screening list is exhausted, the policy administers the
intermediate assessment to one such below-cutoff applicant.

Conditional on reaching this applicant, Lemma
\ref{lem:binary-assessment-cutoff} implies that the assessment has strictly
positive expected value. There is positive probability that the vacancy remains
open until this point; for example, the event that all other potential
applicants are incompatible has strictly positive probability. Therefore this
feasible policy weakly improves on the optimal one-stage policy at every history
and strictly improves on it with positive probability.

It follows that
\[
\mathbb E\left[
U^{MS,*}(q)-U^{1S,*}(q)
\right]
>0.
\]
If an ex ante optimal multistage policy never administered the intermediate
assessment, it would be a feasible one-stage policy and could achieve at most
the one-stage value. Hence every ex ante optimal multistage policy must
administer the intermediate assessment with positive probability.
\end{proof}

\begin{proof}[Proof of Corollary \ref{cor:multistage-avoids-failures}]
For part (i), suppose the one-stage process exhibits the no-hire screening
failure. Then every submitted applicant satisfies
\[
p_i(q)<p^F(q),
\]
and the optimal one-stage payoff is
\[
U^{1S,*}(q)=0.
\]
Suppose there exists a submitted applicant $i$ satisfying
\[
\max\left\{
p^P(q),\underline p_{e_i}^A(q)
\right\}
<
p_i(q)
<
p^F(q).
\]
The firm can ignore all other applicants and treat $i$ as its only remaining
option. By Lemma \ref{lem:binary-assessment-cutoff}, administering the
intermediate assessment then has strictly positive expected payoff. Hence there
exists a feasible multistage policy with strictly positive payoff, so
\[
U^{MS,*}(q)>0=U^{1S,*}(q).
\]
Any optimal multistage policy must therefore administer an intermediate
assessment with positive probability. Because a favorable assessment outcome
has positive probability and raises the posterior above $p^F(q)$, the firm also
proceeds to full screening with positive probability.

For part (ii), suppose the one-stage process exhibits the
inexperienced-compatible screening failure and
\[
p_{01}^{\max}(q)>\underline p_0^A(q).
\]
Choose an applicant $i\in\mathcal A_{01}(q)$ attaining
$p_{01}^{\max}(q)$. By the definition of the one-stage failure,
\[
p^P(q)<p_i(q)<p^F(q).
\]
Together with the assumed inequality,
\[
\underline p_0^A(q)<p_i(q)<p^F(q).
\]
At any history at which the vacancy remains open and $i$ is the firm's only
remaining option, Lemma \ref{lem:binary-assessment-cutoff} implies that the firm
strictly prefers to administer the intermediate assessment. Following $H$, the
posterior exceeds $p^F(q)$ and the firm proceeds to full screening.
\end{proof}

\begin{prop}[Hiring gain for below-cutoff inexperienced-compatible applicants]
\label{prop:optimal-ms-01-helped-app}
Fix $q$ and suppose
\[
\underline p_0^A(q)<p<p^F(q).
\]
Then
\[
\Pr\left(
h_i^{MS,*}=1
\mid
e_i=0,\ k_i=1,\ d_i=1,\ p_i(q)=p
\right)
\ge
\tau_0\chi^{n-1},
\]
where
\[
\chi
:=
\Pr(k_i=0)
=
\sum_{e\in\{0,1\}}\lambda_e
\Phi\left(-\frac{\mu_e}{\sigma_z}\right)
>0.
\]
By contrast,
\[
\Pr\left(
h_i^{1S}=1
\mid
e_i=0,\ k_i=1,\ d_i=1,\ p_i(q)=p
\right)
=
0.
\]
Consequently,
\[
G_{01}^*(p;q)
\ge
\tau_0\chi^{n-1}
>
0.
\]
\end{prop}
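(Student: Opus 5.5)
The one-stage claim is immediate and comes first. By Proposition \ref{prop:optimal-screening}, the firm screens only applicants with $p_j(q)\ge p^F(q)$. Since $p<p^F(q)$, applicant $i$ is never on the screening list. She is therefore never screened, and because screening is required before hiring, $h_i^{1S}=0$ almost surely. That probability is zero.

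For the multistage lower bound, I will use a single event built from the other applicants. Let $E$ be the event that all $n-1$ other potential applicants are incompatible, that is, $k_j=0$ for every $j\ne i$, whether or not they apply. Types are drawn independently across applicants, and each has $\Pr(k_j=0)=\sum_e\lambda_e\Phi(-\mu_e/\sigma_z)=\chi>0$. Hence $\Pr(E)=\chi^{n-1}$. The event $E$ concerns only the other applicants' primitives, so it is independent of applicant $i$'s $(e_i,k_i,\eta_i,z_i,\varepsilon_i)$. Its probability is therefore unchanged by conditioning on $e_i=0$, $k_i=1$, $d_i=1$, $p_i(q)=p$.

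The next step shows that on $E$ the optimal multistage policy eventually treats $i$ as its only remaining option. On $E$, no other applicant can ever be hired, because screening reveals incompatibility. So the vacancy stays open through any full screenings or assessments of others. The main obstacle is that the optimal policy is not characterized globally: it could, in principle, terminate before reaching $i$, or assess $i$ early. I would handle this with a continuation argument. Consider any history on $E$ at which the vacancy is open and $i$ has not been assessed. The firm's expected continuation value from that point is at least the value of assessing $i$ alone, which is strictly positive by Lemma \ref{lem:binary-assessment-cutoff} since $p>\underline p_0^A(q)$. Suppose the firm, in the limit of this history, has no other applicant left. Then optimality of $\pi^{MS,*}$ at that history (taking a subgame-optimal policy) forces it to assess $i$ and screen after $H$. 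If instead it assesses $i$ earlier, the same Lemma implies that after $H$ the posterior $p_H^0(p)>p^F(q)$. Full screening of $i$ then yields the hire, because $k_i=1$.

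The delicate point is ruling out stopping before $i$ is ever reached. I would argue this by noting that termination while $i$ remains unassessed forgoes a strictly positive option value, so it is suboptimal. The one exception is when the firm's information makes $i$ no longer worth assessing. That cannot happen, because $p_i$ is fixed by $(m_i,e_i)$, and other applicants' outcomes are independent of $k_i$ and do not change $p_i$. Given this, on $E$ applicant $i$ is assessed, and a favorable outcome occurs with probability $\tau_0$, conditionally independent of $E$ given $k_i=1$, $e_i=0$. After $H$, screening follows and $i$ is hired. Multiplying gives $\Pr(h_i^{MS,*}=1\mid\cdot)\ge\tau_0\chi^{n-1}$. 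Subtracting the zero one-stage probability yields $G^*_{01}(p;q)\ge\tau_0\chi^{n-1}>0$.
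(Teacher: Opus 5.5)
Your proposal is correct and follows essentially the same route as the paper. The steps match: one-stage exclusion via Proposition~\ref{prop:optimal-screening}; the event that all $n-1$ other potential applicants are incompatible, with probability $\chi^{n-1}$ and independent of applicant $i$'s data; Lemma~\ref{lem:binary-assessment-cutoff} to rule out abandoning $i$ while the vacancy is open; and the factor $\tau_0$ for a favorable assessment followed by screening and hire. Your discussion of early termination just spells out the paper's one-line claim that an optimal multistage policy cannot terminate while $i$ remains available. The ``subgame-optimal'' hedge is unnecessary: histories on $E$ have positive probability given the pool, so any globally optimal policy must already act optimally there.
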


\begin{proof}
Fix a submitted inexperienced-compatible applicant $i$ satisfying
\[
\underline p_0^A(q)
<
p_i(q)
<
p^F(q).
\]
Under one-stage hiring, Proposition \ref{prop:optimal-screening} implies that
$i$ is not included in the firm's screening list. Hence
\[
\Pr\left(
h_i^{1S}=1
\mid
e_i=0,k_i=1,d_i=1,p_i(q)
\right)
=
0.
\]

Now consider the multistage process. Let
\[
E_{-i}
:=
\left\{
k_j=0
\text{ for every }j\ne i
\right\}
\]
be the event that all other $n-1$ potential applicants are incompatible.
Applicant primitives are independent across applicants, so conditional on the
tagged applicant's type, submission decision, and posterior,
\[
\Pr(E_{-i})
=
\chi^{n-1}.
\]

On $E_{-i}$, no other applicant can fill the vacancy. Because retaining an
unassessed application is costless, applicant $i$ can remain available while
the firm evaluates its other options. If the vacancy remains open until $i$ is
the firm's only remaining option, Lemma \ref{lem:binary-assessment-cutoff}
implies that administering the intermediate assessment has strictly positive
value. Thus an optimal multistage policy cannot terminate search while $i$
remains available.

Conditional on $k_i=1$, applicant $i$ receives a favorable intermediate
assessment with probability
\[
\tau_0.
\]
Following $H$, the applicant's posterior exceeds $p^F(q)$, so full screening is
strictly profitable. Screening reveals that $k_i=1$, and the firm hires the
applicant. Therefore,
\[
\Pr\left(
h_i^{MS,*}=1
\mid
e_i=0,k_i=1,d_i=1,p_i(q)
\right)
\ge
\tau_0\chi^{n-1}.
\]
Subtracting the zero one-stage hiring probability gives
\[
G_{01}^*(p_i(q);q)
\ge
\tau_0\chi^{n-1}
>
0.
\]
\end{proof}

\begin{cor}[Positive share of inexperienced-compatible applicants helped]
\label{cor:random-01-helped-app}
Suppose $\sigma_m^2(q)>0$ and
\[
\underline p_0^A(q)<p^F(q).
\]
Then
\[
\nu_{01}(q)>0.
\]
Moreover,
\[
\Pr\left(
G_{01}^*(p_i(q);q)>0
\mid
e_i=0,\ k_i=1,\ d_i=1
\right)
\ge
\nu_{01}(q)
>
0.
\]
\end{cor}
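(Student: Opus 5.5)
The argument splits into two parts: positivity of $\nu_{01}(q)$, followed by an inclusion of events combined with Proposition \ref{prop:optimal-ms-01-helped-app}.

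\emph{Step 1 ($\nu_{01}(q)>0$).} Condition on $e_i=0$, $k_i=1$, $d_i=1$. Because application is independent of $(z_i,\varepsilon_i)$ given experience, the conditional law of $(z_i,m_i)$ is unchanged by $d_i=1$. In that law $z_i$ is $\mathcal N(\mu_0,\sigma_z^2)$ truncated to $[0,\infty)$, and $\varepsilon_i\sim\mathcal N(0,\sigma_m^2(q))$ is independent with $\sigma_m^2(q)>0$. Hence $m_i=z_i+\varepsilon_i$ has a strictly positive density on all of $\mathbb R$: it is the convolution of a nonnegative density with a positive Gaussian density.

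By Proposition \ref{prop:bayesian-reweighting}, $p_i(q)=\Phi\bigl((\kappa(q)m_i+(1-\kappa(q))\mu_0)/\sigma_{\mathrm{post}}(q)\bigr)$. This is a continuous, strictly increasing bijection from $\mathbb R$ onto $(0,1)$, since $\kappa(q)>0$. The interval $(\underline p_0^A(q),p^F(q))$ is nonempty by hypothesis. It lies in $(0,1)$ after intersecting with $(0,1)$, which is harmless because $p^F(q)\le 1$ is implicit and $\underline p_0^A(q)\ge 0$. Its preimage under this map is therefore a nondegenerate open interval of $m$-values, and that interval has positive probability. So $\nu_{01}(q)>0$.

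\emph{Step 2 (the inclusion).} Proposition \ref{prop:optimal-ms-01-helped-app} gives $G_{01}^*(p;q)\ge\tau_0\chi^{n-1}>0$ for every $p\in(\underline p_0^A(q),p^F(q))$. This yields the event inclusion
\[
\{\underline p_0^A(q)<p_i(q)<p^F(q)\}\subseteq\{G_{01}^*(p_i(q);q)>0\}.
\]
Taking conditional probabilities given $e_i=0,k_i=1,d_i=1$ gives the stated lower bound by $\nu_{01}(q)$, and this is positive by Step 1.

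The only point requiring care is measurability: $G_{01}^*(p;q)$ is defined as a difference of conditional probabilities given $p_i=p$, so the composition $G_{01}^*(p_i(q);q)$ should be read as a version of the regular conditional probability. The inclusion holds pointwise for every $p$ in the band, because Proposition \ref{prop:optimal-ms-01-helped-app} gives the bound for each such $p$. The event on the right therefore contains a measurable set of probability at least $\nu_{01}(q)$, and no further argument is needed.
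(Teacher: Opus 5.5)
Your proposal is correct and follows essentially the same route as the paper. Full support of $m_i$ given $e_i=0,\ k_i=1,\ d_i=1$, together with the continuous, strictly increasing map $m_i\mapsto p_i(q)$ onto $(0,1)$, gives $\nu_{01}(q)>0$; Proposition~\ref{prop:optimal-ms-01-helped-app} applied pointwise on the band $(\underline p_0^A(q),p^F(q))$ then gives the event inclusion and the lower bound, while your remarks on the band lying in $(0,1)$ and on measurability of $G_{01}^*(p_i(q);q)$ fill in details the paper leaves implicit.
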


\begin{proof}
Suppose
\[
\sigma_m^2(q)>0
\]
and
\[
\underline p_0^A(q)<p^F(q).
\]
Conditional on
\[
e_i=0,\qquad k_i=1,\qquad d_i=1,
\]
the materials signal $m_i$ has a continuous density that is strictly positive
on $\mathbb R$. Moreover,
\[
p_i(q)
=
\Pr(k_i=1\mid m_i,e_i=0;q)
\]
is a continuous and strictly increasing function of $m_i$ whose range is
$(0,1)$. Therefore the conditional distribution of $p_i(q)$ places positive
probability on every open subinterval of $(0,1)$.

In particular,
\[
\nu_{01}(q)
=
\Pr\left(
\underline p_0^A(q)
<
p_i(q)
<
p^F(q)
\mid
e_i=0,k_i=1,d_i=1
\right)
>0.
\]
For every posterior in this interval, Proposition
\ref{prop:optimal-ms-01-helped-app} implies
\[
G_{01}^*(p_i(q);q)>0.
\]
Hence
\[
\Pr\left(
G_{01}^*(p_i(q);q)>0
\mid
e_i=0,k_i=1,d_i=1
\right)
\ge
\nu_{01}(q)
>
0.
\]
\end{proof}

\begin{proof}[Proof of Corollary \ref{cor:high-ai-most-01-helped}]
By Lemma \ref{lem:high-noise-posterior-limit}, conditional on
\[
e_i=0,\qquad k_i=1,\qquad d_i=1,
\]
we have
\[
p_i(q_n)\to\pi_0
\]
in probability.

By assumption,
\[
\limsup_{n\to\infty}
\underline p_0^A(q_n)
<
\pi_0
<
p_\infty^F,
\]
and
\[
p^F(q_n)\to p_\infty^F.
\]
Therefore there exists $\varepsilon>0$ such that for all sufficiently large
$n$,
\[
\underline p_0^A(q_n)
<
\pi_0-\varepsilon
<
\pi_0+\varepsilon
<
p^F(q_n).
\]
Consequently,
\begin{align*}
\nu_{01}(q_n)
&=
\Pr\left(
\underline p_0^A(q_n)
<
p_i(q_n)
<
p^F(q_n)
\mid
e_i=0,k_i=1,d_i=1
\right)\\
&\ge
\Pr\left(
|p_i(q_n)-\pi_0|<\varepsilon
\mid
e_i=0,k_i=1,d_i=1
\right)
\to1.
\end{align*}
Thus
\[
\nu_{01}(q_n)\to1.
\]

Finally, Proposition \ref{prop:optimal-ms-01-helped-app} implies
\[
G_{01}^*(p_i(q_n);q_n)>0
\]
whenever
\[
\underline p_0^A(q_n)
<
p_i(q_n)
<
p^F(q_n).
\]
Therefore,
\[
\Pr\left(
G_{01}^*(p_i(q_n);q_n)>0
\mid
e_i=0,k_i=1,d_i=1
\right)
\ge
\nu_{01}(q_n)
\to1.
\]
\end{proof}

\end{document}